\date{February 5, 2007}
\newtheorem{theorem}{Theorem}[section]
\newtheorem{proposition}[theorem]{Proposition}
\newtheorem{lemma}[theorem]{Lemma}
\newtheorem{corollary}[theorem]{Corollary}
\theoremstyle{definition}
\theoremstyle{remark}
\newtheorem{remark}[theorem]{Remark}
\numberwithin{equation}{section}
\newcommand{\A}{\mathbf{A}}
\newcommand{\C}{\mathbb{C}}
\newcommand{\D}{\mathbf{D}}
\renewcommand{\epsilon}{\varepsilon}
\newcommand{\N}{\mathbb{N}}
\renewcommand{\phi}{\varphi}
\newcommand{\R}{\mathbb{R}}
\newcommand{\Z}{\mathbb{Z}}
\DeclareMathOperator{\erf}{erf}
\DeclareMathOperator{\Si}{Si}
\DeclareMathOperator{\tr}{tr}
\title[Eigenvalue estimates for the Aharonov-Bohm operator in a domain]{Eigenvalue estimates \\ for the Aharonov-Bohm operator in a domain}
\author{Rupert L. Frank}
\author{Anders M. Hansson}
\address{Department of Mathematics, School of Engineering Sciences,
  Royal Institute of Technology,   100 44 Stockholm, Sweden}
\email{\{rupert, anhan\}@math.kth.se}
\begin{document}

\begin{abstract}
  We prove semi-classical  estimates on moments of eigenvalues of the
  Aharonov-Bohm 
  operator in bounded two-dimensional domains. Moreover, we present a
  counterexample to the generalized diamagnetic inequality which was proposed
  by Erd\H os, Loss and Vougalter. Numerical studies complement these
  results. 
\end{abstract}

\maketitle

%%%%%%%%%%%%%%%%%%%%%%%%%%%%%%%%%%%%%%%%%%%%%%%%%%%%%%%%%%%%%%%%%%%%%%%%%%%%%%%

\section{Introduction}

We shall study inequalities for the eigenvalues of the Aharonov-Bohm operator
\begin{equation}\label{eq:abop}
	H_\alpha^\Omega := (\D-\alpha\A_0)^2
	\qquad\mbox{in}\ L_2(\Omega).
\end{equation}
Here $\Omega\subset\R^2$ is a bounded domain, $\D:=-i\nabla$ and
$\alpha\A_0(x):=\alpha |x|^{-2}(-x_2,x_1)^T$ is a vector potential
generating an Aharonov-Bohm magnetic field with flux $\alpha$ through the
origin. We shall assume that this point belongs to the interior of the simply-connected hull of $\Omega$ and that $\alpha\not\in\Z$, for otherwise $\alpha\A_0$ can be gauged away. On the boundary of $\Omega$ we impose Dirichlet boundary conditions. More precisely, the operator \eqref{eq:abop} is defined through the closure of the quadratic form $\|(\D-\alpha\A_0)u\|^2$ on $C_0^\infty(\Omega\setminus\{0\})$.

Before stating our main results we would like to recall some
well-known semi-classical spectral asymptotics and estimates for the
Dirichlet Laplacian $-\Delta^\Omega$ and its magnetic version
$(\D-\A)_\Omega^2$, $\A$ being an arbitrary vector potential. If $\Omega$ is
bounded then the spectrum of $-\Delta^\Omega$ is discrete, and by a
classical result due to Weyl (see, e.g., \cite{ReSi}) one has, as
$\Lambda\to\infty$, 
\begin{equation}\label{eq:weyl}
	\tr(-\Delta^\Omega-\Lambda)_-^\gamma 
	\sim \frac1{(2\pi)^2} \iint_{\Omega\times\R^2} (|\xi|^2-\Lambda)_-^\gamma\,dx\,d\xi
	= \frac1{4\pi(\gamma+1)}|\Omega|\Lambda^{\gamma+1}
\end{equation}
for all $\gamma\geq 0$. Note that the right-hand side involves the symbol $|\xi|^2$ on the phase space $\Omega\times\R^2$. The asymptotics \eqref{eq:weyl} are accompanied by the estimate
\begin{equation}\label{eq:bly}
	\tr(-\Delta^\Omega-\Lambda)_-^\gamma 
	\leq R_\gamma \frac1{(2\pi)^2} \iint_{\Omega\times\R^2} (|\xi|^2-\Lambda)_-^\gamma\,dx\,d\xi,
	\qquad \gamma\geq 0,
\end{equation}
with a universal constant $R_\gamma$ independent of $\Omega$ and
$\Lambda$, and one is interested in the sharp value of this constant
$R_\gamma$. In view of \eqref{eq:weyl} the sharp constant obviously
cannot be smaller than 1, and by an argument of Aizenman and Lieb
\cite{AL}, it is a non-increasing function of $\gamma$. P\'olya
\cite{P} proved the estimate \eqref{eq:bly} for $\gamma=0$ with
constant $1$ \emph{under the additional assumption that $\Omega$ is a
  tiling domain}. His famous conjecture that this is true for
arbitrary domains is still unproved. Berezin \cite{B} and
independently Li and Yau \cite{LY} (see also \cite{laptev}) proved
\eqref{eq:bly} for $\gamma\geq 1$ with the sharp constant
$R_\gamma=1$. This also yields the so far best known bound on the
sharp constant  for $\gamma=0$, namely $R_0\leq 2$. Indeed,
\begin{equation}\label{eq:goingdown}
	\tr(-\Delta^\Omega-\Lambda)_-^0 
	\leq (\mu-\Lambda)^{-1} \tr(-\Delta^\Omega-\mu)_-
	\leq (\mu-\Lambda)^{-1} \frac1{8\pi}|\Omega|\mu^{2},
	\qquad \mu>\Lambda,
\end{equation}
and the claim follows by optimization with respect to $\mu$. We note
that the estimate \eqref{eq:bly} for $\gamma=0$ and $\gamma=1$ is
closely related to the estimates 
\begin{equation*}
	\lambda_N^\Omega \geq \rho_0 4\pi |\Omega|^{-1}N
	\qquad\text{and}\qquad
	\sum_{j=1}^N \lambda_j^\Omega \geq \rho_1 2\pi |\Omega|^{-1} N^2
\end{equation*}
for the eigenvalues $\lambda_j^\Omega$ of the operator
$-\Delta^\Omega$. The form \eqref{eq:bly}, however, shows the close
connection with the Lieb-Thirring inequality, see \cite{LT1} and also
the review article \cite{LW2}. 

We  now turn to the `magnetic' analog of \eqref{eq:bly}, i.e., where
$-\Delta^\Omega$ is replaced by the Dirichlet realization of the
operator $(\D-\A)^2_\Omega$ in $L_2(\Omega)$ and $\A$ is a
(sufficiently regular) magnetic vector potential. Note that the value
of the right-hand side in \eqref{eq:bly} does not change if $\xi$ is replaced by $\xi-\A(x)$. Hence one is interested in the estimate
\begin{equation}\label{eq:blymag}
	\tr((\D-\A)^2_\Omega-\Lambda)_-^\gamma 
	\leq R_\gamma^{\rm{mag}} \frac1{(2\pi)^2} \iint_{\Omega\times\R^2} (|\xi|^2-\Lambda)_-^\gamma\,dx\,d\xi,
	\qquad \gamma\geq 0,
\end{equation}
with a universal constant $R_\gamma^{\rm{mag}}$ independent of
$\Omega$, $\Lambda$ and $\A$. It is a consequence of the sharp
Lieb-Thirring inequality by Laptev and Weidl \cite{LW} that
$R_\gamma^{\rm{mag}}=1$ for $\gamma\geq \frac32$. Not much is known
about \eqref{eq:blymag} in the case $\gamma<\frac32$. The Laptev-Weidl
result and an argument similar to \eqref{eq:goingdown} yield the
(probably non-sharp) estimate $R_{\gamma}^{\rm{mag}} \leq (\frac53)^{3/2}
(\gamma/(\gamma+1))^\gamma$ for $\gamma<\frac32$. For $\gamma=0$ and
$\gamma=1$ in particular one finds the values $2.1517$ and $1.0758$,
respectively. In \cite{ELV} the estimate \eqref{eq:blymag} is shown to
hold for $\gamma\geq1$ with constant $1$ \emph{in the special case of
  a homogeneous magnetic field} 
\begin{equation}\label{eq:ahom}
	\A(x) = \frac B2 (-x_2,x_1)^T.
\end{equation} 
It was recently shown in \cite{FLW} that \eqref{eq:blymag} does
\emph{not} hold with constant $1$ if $0\leq\gamma<1$, not even when
$\Omega$ is tiling. 
Moreover, the authors determined the optimal constant such that
\eqref{eq:blymag}  holds
for all $\Lambda$ under the constraint that $\A$ is given by
\eqref{eq:ahom} and $\Omega$ is tiling. 

%Moreover, the sharp constant was determined in the
%case where $\A$ is given by \eqref{eq:ahom} and $\Omega$ is tiling. 

In this paper we shall consider $\A$ corresponding to an Aharonov-Bohm magnetic field and we shall prove the estimate
\begin{equation}\label{eq:blyab}
	\tr(H_\alpha^\Omega-\Lambda)_-^\gamma 
	\leq C_\gamma(\alpha) \frac1{(2\pi)^2} \iint_{\Omega\times\R^2} (|\xi|^2-\Lambda)_-^\gamma\,dx\,d\xi,
	\qquad \gamma\geq 1,
\end{equation}
with a constant $C_\gamma(\alpha)$ given explicitly in terms of Bessel
functions. Even though our bound is probably not sharp, it improves
upon the previously known estimates. Indeed, numerical evaluation of
our constant shows that \eqref{eq:blyab} holds for all $\alpha$ with
constants $C_0(\alpha)=1.0540$ and $C_1(\alpha)=1.0224$ if $\gamma=0$
and $1$, respectively, see Section \ref{sec:bly}. We complement our
analytical results with a numerical study of the eigenvalue of the
operator \eqref{eq:abop} for five domains: a disc, a square and three
different annuli. In all cases the estimate \eqref{eq:blyab} seems to
be valid with constant $1$. We refer to Section \ref{sec:num} and
Figures \ref{fig:ltbot}--\ref{fig:rf3a} for a detailed account of
the outcome of our experiments. 

For the proof of our eigenvalue estimate we proceed similarly as in \cite{ELV}. Indeed, by the Berezin-Lieb inequality, \eqref{eq:blyab} is an immediate consequence of the \emph{generalized diamagnetic inequality}
\begin{equation}\label{eq:gendiamag}
\tr\chi_\Omega(H_\alpha-\Lambda)_-^\gamma\chi_\Omega 
\leq R_{\gamma}(\alpha) \tr\chi_\Omega(-\Delta-\Lambda)_-^\gamma\chi_\Omega
\qquad
\mbox{for all bounded $\Omega\subset\R^2$}.
\end{equation}
Here $H_\alpha:=H_\alpha^{\R^2}$ denotes the Aharonov-Bohm operator in
the whole space. In \cite{ELV} an analogous estimate was proved in the
case \eqref{eq:ahom} \emph{with constant} 1 when $\gamma\ge 1$. 
The authors conjectured that such an inequality is not true for an
arbitrary magnetic field, but their counterexample contains a gap;
the condition 2) on p. 905 can not be satisfied by a non-trivial
radial vector field, as it was first pointed out by M.~Solomyak.
This gap can be removed by a minor change in the argument,
since the assumption of radial symmetry is not essential in the proof 
\cite{ELo}. Nevertheless, we
feel that our example is of independent interest and sheds some light
on the particularities of the Aharonov-Bohm operator. What we prove is
that the sharp constant in \eqref{eq:gendiamag} is strictly greater
than unity, see Theorem \ref{diamagthm}. We establish this by a thorough
study of the local spectral density of the operator $H_\alpha$, see
Section~\ref{sec:density}. 

We mention in closing the papers \cite{bel}, \cite{ef}, \cite{amh}, \cite{mor},
where Lieb-Thirring estimates for the Schr\"odinger operator
$H_\alpha+V$ were obtained. Our estimates can be seen as a refinement
of these estimates in the special case where the potential $V$ is
equal to a negative constant $-\Lambda$ inside and equal to infinity
outside a bounded domain $\Omega\subset\R^2$. 

\textbf{Acknowledgements.} The authors would like to thank A. Laptev for the setting of the problem and helpful remarks. The first author is grateful to E. H. Lieb and R. Seiringer for their hospitality at Princeton University and thanks them, H. Kalf and M. Loss for fruitful discussions.

%%%%%%%%%%%%%%%%%%%%%%%%%%%%%%%%%%%%%%%%%%%%%%%%%%%%%%%%%%%%%%%%%%%%%%%%%%%%%%%

\section{The Aharonov-Bohm operator in the whole space}\label{sec:density}

\subsection{Diagonalization}

In this section we recall some well-known facts about the
Aharonov-Bohm operator in the whole space, see, e.g., \cite{ab}, \cite{ruij}. We
denote by $H_\alpha$ the self-adjoint operator in $L_2(\R^2)$
associated with the closure of the quadratic form 
\begin{equation*}
	\int_{\R^2} |(\D-\alpha\A_0)u|^2\,dx,
	\qquad u\in C_0^\infty(\R^2\setminus\{0\}).
\end{equation*}
Here $\D:=-i\nabla$, $\A_0(x):=|x|^{-2}(-x_2,x_1)^T$ and
$\alpha\in\R$. Moreover,  $J_\nu$ denotes, as usual, the Bessel function of the first kind of order $\nu$, see \cite{as}. With polar coordinates $x=|x|(\cos\theta_x,\sin\theta_x)$ and similarly for $\xi$, we define
\begin{equation*}
	\mathcal F_\alpha(\xi,x) := \frac1{2\pi}\sum_{n\in\Z} J_{|n-\alpha|}(|\xi||x|) e^{in(\theta_\xi-\theta_x)},
	\qquad \xi,x\in\R^2,
\end{equation*}
and put 
\begin{equation*}
	(\mathcal F_\alpha u)(\xi) := \int_{\R^d} \mathcal F_\alpha(\xi,x) u(x)\,dx,
	\qquad \xi\in\R^2,
\end{equation*}
for $u\in C_0^\infty(\R^2)$. Note that $\mathcal
F_0(\xi,x)=(2\pi)^{-1} e^{-i\xi\cdot x}$ \cite[9.1.41]{as}, so
$\mathcal F_0$ is the ordinary Fourier transform, which diagonalizes
$H_0=-\Delta$. Similarly, one has 

\begin{lemma}\label{diagonal}
	For any $\alpha\in\R$, $\mathcal F_\alpha$ extends to a unitary operator in $L_2(\R^2)$ and diagonalizes $H_\alpha$, i.e.,
	\begin{equation*}
		(\mathcal F_\alpha f(H_\alpha)u)(\xi)= f(|\xi|^2) (\mathcal F_\alpha u)(\xi),
		\qquad \xi\in\R^2,
	\end{equation*}
for any $u\in L_2(\R^2)$ and $f\in L_\infty(\R)$.
\end{lemma}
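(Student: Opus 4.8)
The plan is to establish Lemma~\ref{diagonal} by exploiting the rotational symmetry of $H_\alpha$ to reduce it to a family of one-dimensional radial problems, and then to invoke the classical Hankel-transform diagonalization of the radial operators. First I would decompose $L_2(\R^2)$ into angular-momentum sectors: writing $u(x)=\sum_{n\in\Z} r^{-1/2} u_n(r) e^{in\theta_x}$ with $r=|x|$, the form $\int |(\D-\alpha\A_0)u|^2\,dx$ diagonalizes, and on the $n$-th sector $H_\alpha$ acts (after the unitary $u(r)\mapsto r^{1/2}u(r)$ from $L_2(\R^2,\text{sector }n)$ to $L_2(\R_+,dr)$) as the Bessel-type operator $-\partial_r^2 + (\nu_n^2-\tfrac14)r^{-2}$ with $\nu_n:=|n-\alpha|$. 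Here one must be careful with the form domain: since $\alpha\notin\Z$ every $\nu_n>0$, and for $0<\nu_n<1$ the formal operator has deficiency indices requiring a boundary condition at $r=0$; the Friedrichs extension (which is what the closure of the form on $C_0^\infty(\R^2\setminus\{0\})$ selects) corresponds to the condition picking out the solution behaving like $r^{\nu_n+1/2}$, i.e.\ the one associated with $J_{\nu_n}$ rather than $J_{-\nu_n}$.

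Next I would recall the Hankel transform of order $\nu$: the map $(\mathcal H_\nu g)(s) = \int_0^\infty J_\nu(sr)\,g(r)\,r\,dr$ is unitary on $L_2(\R_+, r\,dr)$ and satisfies $\mathcal H_\nu^2 = \mathrm{id}$, and it diagonalizes the Bessel operator above in the sense that $-\partial_r^2 + (\nu^2-\tfrac14)r^{-2}$ conjugated by the Hankel transform becomes multiplication by $s^2$. Assembling these sectorwise Hankel transforms, with the orders $\nu_n=|n-\alpha|$, and undoing the $r^{1/2}$ weightings, one obtains precisely the operator $\mathcal F_\alpha$ as written: the kernel $\frac1{2\pi}\sum_n J_{|n-\alpha|}(|\xi||x|)e^{in(\theta_\xi-\theta_x)}$ is exactly the direct sum over $n$ of the Hankel kernels tensored against the Fourier basis $\{(2\pi)^{-1/2}e^{in\theta}\}$ on the circle. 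Unitarity of $\mathcal F_\alpha$ on $L_2(\R^2)$ then follows from unitarity of each $\mathcal H_{\nu_n}$ and of the angular Fourier series, and the intertwining relation $(\mathcal F_\alpha f(H_\alpha)u)(\xi)=f(|\xi|^2)(\mathcal F_\alpha u)(\xi)$ follows from the sectorwise statement by the spectral theorem (first for $f$ bounded continuous, then for general $f\in L_\infty$ by a standard density/dominated-convergence argument).

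The main obstacle, and the place where genuine care is needed rather than routine bookkeeping, is the identification of the \emph{correct self-adjoint extension} in the sectors with $0<\nu_n<1$ (there are at most one or two such $n$, depending on $\alpha$) together with the justification that the Hankel transform $\mathcal H_{\nu_n}$ diagonalizes precisely that extension. Concretely, one should check that $C_0^\infty(\R_+)$ (the image of the $n$-th sector of $C_0^\infty(\R^2\setminus\{0\})$) is a form core for the Friedrichs extension of $-\partial_r^2+(\nu^2-\tfrac14)r^{-2}$, and that the quadratic form of that extension equals $\int_0^\infty s^2 |(\mathcal H_\nu g)(s)|^2\,ds$; both are classical (going back to Weyl's limit-point/limit-circle analysis and the theory in, e.g., Titchmarsh or the references \cite{ab},\cite{ruij} already cited), so in the write-up I would state them with a reference rather than reprove them. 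A secondary technical point is convergence of the series defining $\mathcal F_\alpha(\xi,x)$ and the interchange of sum and integral in $(\mathcal F_\alpha u)(\xi)$: this is handled by working first with $u$ supported in finitely many angular sectors (or using the uniform bound $|J_\nu(t)|\le C t^{\nu}$ near $0$ and $|J_\nu(t)|\le C t^{-1/2}$ for large $t$ with constants controlled in $\nu$) and passing to the limit in $L_2$.
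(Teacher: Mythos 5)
Your proposal follows essentially the same path as the paper's proof: decompose $L_2(\R^2)$ into angular-momentum sectors, reduce $H_\alpha$ on the $n$-th sector (after the $r^{1/2}$ unitary) to the Friedrichs extension of the Bessel operator $-\partial_r^2+(\nu_n^2-\tfrac14)r^{-2}$ with $\nu_n=|n-\alpha|$, and invoke the Hankel-transform diagonalization from Titchmarsh. The only cosmetic difference is that you phrase the sectorwise transform on the weighted space $L_2(\R_+,r\,dr)$ and conjugate, whereas the paper works directly with the symmetrized kernel $\sqrt{rk}\,J_\nu(rk)$ on $L_2(\R_+,dr)$; your added remarks about the $0<\nu_n<1$ sectors and series convergence are correct and consistent with what the paper leaves implicit.
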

	
We sketch a proof of this assertion for the sake of completeness.

\begin{proof}
	The orthogonal decomposition
	\begin{equation*}
		L_2(\R^2) = \bigoplus_{n\in\Z} \mathfrak H_n,
		\qquad \mathfrak H_n:=\{ |x|^{-1/2}g(|x|) e^{in\theta_x} : \ g\in L_2(\R_+) \},
	\end{equation*} 
	reduces $H_\alpha$. The part of $H_\alpha$ in $\mathfrak H_n$ is unitarily equivalent to the operator
	\begin{equation*}
		h_{|n-\alpha|} := -\frac{d^2}{dr^2}+\frac{(n-\alpha)^2-1/4}{r^2}
		\qquad\mbox{in } L_2(\R_+),
	\end{equation*}
which is	defined as the Friedrichs extension of the corresponding differential expression on $C^\infty_0(\R_+)$. (We emphasize that in our notation $\R_+$ means the \emph{open} interval $(0,\infty)$.) The operator
	\begin{equation*}
		(\Phi_\nu g)(k) := \int_{\R_+} \sqrt{rk} J_\nu(rk) g(r) \,dr,
		\qquad k\in\R_+,
	\end{equation*}
	initially defined for $g\in C^\infty_0(\R_+)$, extends to a unitary operator in $L_2(\R_+)$ and diagonalizes $h_\nu$, i.e.,
	\begin{equation*}
		(\Phi_\nu f(h_\nu)g)(k)= f(k^2) (\Phi_\nu g)(k),
		\qquad k\in\R_+,
	\end{equation*}
for any $g\in L_2(\R_+)$ and $f\in L_\infty(\R)$
\cite[Ch.~VIII]{T}. The assertion of the lemma is a simple consequence
of these facts. 
\end{proof}

The proof of the preceding lemma shows in particular that the
operators $H_\alpha$ and $H_{\alpha+m}$ with $m\in\Z$ are unitarily
equivalent via multiplication by $e^{im\theta_x}$ (a gauge
transformation). Hence, without loss of generality \emph{we shall
   assume that $0\leq\alpha<1$.}  

Lemma \ref{diagonal} implies that $f(H_\alpha)$, at least formally, is
an integral operator with integral kernel 
\begin{align*}
	f(H_\alpha)(x,y)
	& = \int_{\R^2} \overline{\mathcal F_\alpha(\xi,x)} f(|\xi|^2) \mathcal F_\alpha(\xi,y)\,d\xi \\
	& = \frac1{2\pi}\sum_{n\in\Z} \int_0^\infty f(k^2) 
	J_{|n-\alpha|}(k|x|) J_{|n-\alpha|}(k|y|) e^{in(\theta_x-\theta_y)}k \, dk \\
	& = \frac1{4\pi}\sum_{n\in\Z} \int_0^\infty f(\lambda) 
	J_{|n-\alpha|}(\sqrt\lambda|x|) J_{|n-\alpha|}(\sqrt\lambda|y|) e^{in(\theta_x-\theta_y)} \, d\lambda.
\end{align*}
On the diagonal this is
\begin{equation}\label{eq:diagonal}
	f(H_\alpha)(x,x)
	= \frac1{4\pi} \int_0^\infty f(\lambda) \rho_\alpha (\sqrt\lambda|x|) \, d\lambda,
\end{equation}
where
\begin{equation}\label{eq:rho}
	\rho_\alpha(t) := \sum_{n\in\Z} J_{|n-\alpha|}^2(t),
	\qquad t\geq 0.
\end{equation}
In particular, $\frac1{4\pi} \rho_\alpha (\sqrt\lambda|x|)$ is the
\emph{local spectral density} at energy $\lambda$. In the following
subsection we collect some basic information about this function, and
in Subsections \ref{sec:mom} and \ref{sec:exp} we prove some results
about the precise asymptotic behavior of \eqref{eq:diagonal} as
$|x|\to\infty$ for special choices of functions $f$. This will allow
us to prove that the generalized diamagnetic inequality is violated.

%%%%%%%%%%%%%%%%%%%%%%%%%%%%%%%%%%%%%%%%%%%%%%%%%%%%%%%%%%%%%%%%%%%%%%%%%%%%%%%

\subsection{The spectral density}

Our results are based on a detailed study of the function
$\rho_\alpha$ from \eqref{eq:rho}. We note that if $\alpha=0$ then
$\rho_0\equiv 1$ by \cite[9.1.76]{as}. An expression in terms of
elementary functions is also available \cite[5.2.15]{as} for $\alpha=1/2$,
\begin{equation}\label{eq:rhohalf}
	\rho_{1/2}(t)=\frac2\pi\int_0^{2t}\frac{\sin s}s\,ds.
\end{equation}
As $t\to\infty$, $\rho_{1/2}(t)$ tends to $1$ in an oscillating manner. As we shall see, this behavior appears for all $0<\alpha<1$. The starting point of our study of $\rho_\alpha$ with non-trivial flux $\alpha$ is the following

\begin{lemma}\label{rhoderivative}
	For any $0<\alpha<1$, $\rho_\alpha$ is a smooth function on $\R_+$ with $\rho_\alpha(0)=0$, $\rho_\alpha(t)\to 1$ as $t\to\infty$ and
	\begin{equation}\label{eq:rhoderivative}
		\rho_\alpha'(t) 
		= J_\alpha(t)J_{\alpha-1}(t)+J_{1-\alpha}(t)J_{-\alpha}(t),
		\qquad t\geq 0.
	\end{equation}        
\end{lemma}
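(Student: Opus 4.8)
The plan is to compute $\rho_\alpha'$ termwise and sum the resulting telescoping series. First I would differentiate \eqref{eq:rho} term by term using the standard Bessel identity $\frac{d}{dt}\big(J_\nu^2(t)\big) = J_\nu(t)\big(J_{\nu-1}(t)-J_{\nu+1}(t)\big)$ (from the recurrence relations $J_\nu' = \tfrac12(J_{\nu-1}-J_{\nu+1})$, see \cite{as}). Applying this with $\nu = |n-\alpha|$ for each $n\in\Z$ gives $\rho_\alpha'(t) = \sum_{n\in\Z} J_{|n-\alpha|}(t)\big(J_{|n-\alpha|-1}(t)-J_{|n-\alpha|+1}(t)\big)$. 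Splitting the sum at $n\le 0$ and $n\ge 1$ (so that $|n-\alpha| = \alpha-n$ in the first range and $|n-\alpha| = n-\alpha$ in the second, using $0<\alpha<1$), one reindexes each half and checks that consecutive terms cancel, leaving only the two boundary contributions coming from the indices adjacent to the "turning point" between $n=0$ and $n=1$. These surviving terms are precisely $J_\alpha J_{\alpha-1}$ and $J_{1-\alpha}J_{-\alpha}$, which is \eqref{eq:rhoderivative}.

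Next I would justify that the differentiation and rearrangement are legitimate. For this one needs locally uniform convergence of both $\sum_n J_{|n-\alpha|}^2(t)$ and the differentiated series on $\R_+$. The asymptotics $J_\nu(t) \sim (t/2)^\nu/\Gamma(\nu+1)$ as $\nu\to\infty$ (uniformly for $t$ in a bounded set) make the tails decay super-exponentially in $n$, so the series and all its term-by-term derivatives converge locally uniformly; hence $\rho_\alpha$ is smooth on $\R_+$ and the termwise derivative is valid. The same small-$t$ bound gives $\rho_\alpha(t)\to 0$ as $t\to 0+$, and since $J_{|n-\alpha|}(0)=0$ for every $n$ (as $|n-\alpha|>0$), in fact $\rho_\alpha(0)=0$ with the series extended continuously to $t=0$.

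For the limit $\rho_\alpha(t)\to 1$ as $t\to\infty$, the cleanest route is to integrate \eqref{eq:rhoderivative}: one has $\rho_\alpha(t) = \int_0^t \big(J_\alpha J_{\alpha-1} + J_{1-\alpha}J_{-\alpha}\big)(s)\,ds$, and this integral can be evaluated in closed form. Using $J_{-\alpha}(s) = \cos(\pi\alpha) J_\alpha(s) - \sin(\pi\alpha) Y_\alpha(s)$ (and similarly for $J_{\alpha-1}$, $J_{1-\alpha}$) together with the known primitive $\int_0^t J_\mu(s) J_{\nu}(s)\, ds / \text{(type integrals)}$, or alternatively recognizing \eqref{eq:rhohalf} as the model case, one shows the antiderivative is a bounded oscillatory function tending to $1$. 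An arguably slicker alternative: invoke \eqref{eq:diagonal} with $f = \mathbf 1_{[0,\Lambda]}$ together with the known fact that the integrated density of states of $H_\alpha$ coincides with that of $-\Delta$ (both diagonalize to multiplication by $|\xi|^2$ on $L_2(\R^2)$ via Lemma \ref{diagonal}, which is measure-preserving), forcing $\frac1{4\pi}\int_0^\Lambda \rho_\alpha(\sqrt\lambda |x|)\,d\lambda$ to have the same $|x|\to\infty$ leading behavior $\frac{\Lambda}{4\pi}$ as in the $\alpha=0$ case; combined with monotonicity/boundedness of the antiderivative one concludes $\rho_\alpha(t)\to 1$.

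The main obstacle I anticipate is the telescoping bookkeeping: one must be careful that the shift $\nu\mapsto\nu\pm 1$ interacts correctly with the absolute value $|n-\alpha|$ across the boundary between $n=0$ and $n=1$, and in particular that no spurious term survives and the two that do survive carry the correct sign. Establishing the $t\to\infty$ limit rigorously is the second delicate point, since it requires either an explicit evaluation of an oscillatory Bessel integral or an appeal to the unitarity in Lemma \ref{diagonal} to transfer the density-of-states information from $-\Delta$; the termwise differentiation and smoothness, by contrast, are routine given the factorial decay of $J_\nu(t)$ in $\nu$.
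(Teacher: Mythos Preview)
Your telescoping argument for the derivative formula is correct and amounts to a self-contained proof of the identity the paper simply cites from Luke \cite[11.2(10)]{luke}. The paper's proof is just two citations: that reference together with \cite[11.4.42]{as} give directly
\[
\rho_\alpha(t)=\int_0^t\bigl(J_\alpha J_{\alpha-1}+J_{1-\alpha}J_{-\alpha}\bigr)(s)\,ds
= 1-\int_t^\infty\bigl(J_\alpha J_{\alpha-1}+J_{1-\alpha}J_{-\alpha}\bigr)(s)\,ds,
\]
from which smoothness, the derivative formula, $\rho_\alpha(0)=0$ and $\rho_\alpha(t)\to 1$ all follow at once. Your route (a) for the limit is therefore the right one and coincides with what the paper does by citing the value of the improper integral; you would just need to actually evaluate $\int_0^\infty J_\nu J_{\nu-1}\,ds$, which is standard.

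Your alternative route (b), however, has a genuine gap. Unitarity of $\mathcal F_\alpha$ does not produce any finite-trace identity on all of $\R^2$, so there is no direct ``same integrated density of states'' statement to invoke. And even granting that $\tfrac1\Lambda\int_0^\Lambda\rho_\alpha(\sqrt\lambda\,r)\,d\lambda\to 1$ as $r\to\infty$, this is only a weighted Ces\`aro-type average; it does not imply $\rho_\alpha(t)\to 1$ pointwise unless you also have monotonicity, and $\rho_\alpha$ is \emph{not} monotone --- it oscillates around $1$, as Lemma~\ref{rem22} makes explicit. So drop (b) and complete (a), which is exactly the paper's approach.
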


\begin{proof}
	By \cite[11.2(10)]{luke} and \cite[11.4.42]{as} one has, for
        all $t\geq 0$, 
	\begin{align*}
		\rho_\alpha(t) 
		& = \int_0^t \left(J_\alpha(s)J_{\alpha-1}(s)+J_{1-\alpha}(s)J_{-\alpha}(s)\right)\,ds \\
		& = 1- \int_t^\infty \left(J_\alpha(s)J_{\alpha-1}(s)+J_{1-\alpha}(s)J_{-\alpha}(s)\right)\,ds,
	\end{align*}
	which implies the assertion.
\end{proof}

% We use Hansen's formula \cite[\S5.51]{watson}
% \begin{equation}
% \sum_{n=0}^{\infty}\epsilon_n J^2_{n+\alpha}(t)
% =2\alpha \int_0^\infty J^2_\alpha(t)\frac{ds}s,
% %\qquad \text{where}\ \epsilon_n=
% %\begin{cases}
% %  1 &\text{if}\ n=1 \\ 2 &\text{if}\ n\neq 1
% %\end{cases},
% \qquad \text{where}\ \epsilon_n=2-\delta_{n0},
% \end{equation}
% to see that
% \begin{align*}
% \rho_\alpha(t)
% &= \sum_{n=0}^\infty J^2_{n+\alpha}(t)
% +\sum_{n=0}^\infty J^2_{n+1-\alpha}(t)\\
% &=\frac{J^2_\alpha(t)+J^2_{1-\alpha}(t)}{2} 
% +\int_0^t \left(\alpha
% J^2_\alpha(s)+(1-\alpha)J^2_{1-\alpha}(s)\right)\,\frac{ds}s\\
% &=\int_0^t \left(J_\alpha(s)
%   J_{\alpha-1}(s)+J_{1-\alpha}(s)J_{-\alpha}(s)\right)\,ds.
% \end{align*}
% The last equality follows from an integration by parts:
% \begin{equation*}
%   \alpha\int_0^t \frac{J_\alpha^2(s)}{\sqrt{s}}\frac{ds}{\sqrt s}
% = -\frac{ J^2_\alpha(t)}2+\int_0^tJ_\alpha(s)J_{\alpha-1}(s)ds.
% \end{equation*}

Our next result will not be needed in the sequel, but it helps to
clarify the behavior of $\rho_\alpha$ and demonstrates the methods which we
shall use later on.

\begin{lemma}\label{rem22}
	Let $0<\alpha<1$. As $t\to\infty$, 
	\begin{equation*}
    \rho_\alpha(t)=1-\frac{\sin\alpha\pi}{\pi}\frac{\cos 2t}{t}+
    \mathcal O(t^{-2}).
  \end{equation*}
\end{lemma}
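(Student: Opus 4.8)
The plan is to start from the identity established in Lemma \ref{rhoderivative}, namely
\[
	\rho_\alpha(t) = 1- \int_t^\infty \bigl(J_\alpha(s)J_{\alpha-1}(s)+J_{1-\alpha}(s)J_{-\alpha}(s)\bigr)\,ds,
\]
and obtain the claimed expansion by inserting the large-argument asymptotics of the Bessel functions into the integrand and integrating the resulting oscillatory terms. Recall that $J_\nu(s)=\sqrt{2/(\pi s)}\,\cos(s-\nu\pi/2-\pi/4)+\mathcal O(s^{-3/2})$ as $s\to\infty$, uniformly for $\nu$ in a bounded set. Hence each of the four products $J_\mu(s)J_\nu(s)$ above is of the form $\tfrac{2}{\pi s}\cos(s-\mu\pi/2-\pi/4)\cos(s-\nu\pi/2-\pi/4)+\mathcal O(s^{-2})$, which by the product-to-sum formula equals $\tfrac1{\pi s}\bigl[\cos\bigl((\mu-\nu)\pi/2\bigr)+\cos\bigl(2s-(\mu+\nu)\pi/2-\pi/2\bigr)\bigr]+\mathcal O(s^{-2})$.

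The next step is to add the two products. For the pair $(\mu,\nu)=(\alpha,\alpha-1)$ one has $(\mu-\nu)\pi/2=\pi/2$, so the non-oscillating part vanishes, and the oscillating part is $\tfrac1{\pi s}\cos(2s-(2\alpha-1)\pi/2-\pi/2)=\tfrac1{\pi s}\cos(2s-\alpha\pi)$. For the pair $(\mu,\nu)=(1-\alpha,-\alpha)$ one again gets $(\mu-\nu)\pi/2=\pi/2$, killing the constant term, and the oscillating part is $\tfrac1{\pi s}\cos(2s-(1-2\alpha)\pi/2-\pi/2)=\tfrac1{\pi s}\cos(2s+\alpha\pi-\pi)=-\tfrac1{\pi s}\cos(2s+\alpha\pi)$. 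Summing, the $s^{-1}$-term of the integrand is
\[
	\frac1{\pi s}\bigl(\cos(2s-\alpha\pi)-\cos(2s+\alpha\pi)\bigr)=\frac{2\sin\alpha\pi}{\pi s}\,\sin 2s,
\]
and the integrand is therefore $\tfrac{2\sin\alpha\pi}{\pi}\,\tfrac{\sin 2s}{s}+\mathcal O(s^{-2})$.

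Finally I integrate from $t$ to $\infty$. The remainder contributes $\mathcal O(t^{-1})$ — too crude; so the $\mathcal O(s^{-2})$ bound on the remainder must actually be shown to produce an $\mathcal O(t^{-2})$ tail, which it does since $\int_t^\infty s^{-2}\,ds = t^{-1}$... wait, that gives only $\mathcal O(t^{-1})$. The real point is that the genuine remainder after extracting the $s^{-1}$ term is an oscillatory expression of size $\mathcal O(s^{-2})$ whose integral, after one integration by parts, is $\mathcal O(t^{-2})$; one must be slightly careful here and extract the oscillatory structure of the next Bessel correction term rather than bounding it trivially. For the main term, $\int_t^\infty \tfrac{\sin 2s}{s}\,ds = \int_{2t}^\infty \tfrac{\sin u}{u}\,du$, and an integration by parts gives $\int_{2t}^\infty \tfrac{\sin u}{u}\,du = \tfrac{\cos 2t}{2t} - \int_{2t}^\infty \tfrac{\cos u}{u^2}\,du = \tfrac{\cos 2t}{2t}+\mathcal O(t^{-2})$. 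Multiplying by $\tfrac{2\sin\alpha\pi}{\pi}$ and subtracting from $1$ yields
\[
	\rho_\alpha(t)=1-\frac{\sin\alpha\pi}{\pi}\,\frac{\cos 2t}{t}+\mathcal O(t^{-2}),
\]
as claimed.

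The main obstacle is bookkeeping of the error terms: one has to handle the second-order term in the Bessel asymptotics carefully enough to see that, after summing the two products and integrating, it really contributes only $\mathcal O(t^{-2})$ rather than the naive $\mathcal O(t^{-1})$; the cancellation that achieves this comes both from the pairing of the $\alpha$- and $(1-\alpha)$-terms and from the oscillation of the remaining integrand (an integration by parts). Everything else is routine trigonometry.
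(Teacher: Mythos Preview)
Your proposal is correct and follows essentially the same route as the paper: expand $\rho_\alpha'(t)=J_\alpha J_{\alpha-1}+J_{1-\alpha}J_{-\alpha}$ via the large-argument Bessel asymptotics (the paper keeps the explicit second-order correction $-\tfrac{4\nu^2-1}{8t}\sin(\cdot)$ so as to see directly that the $s^{-2}$ part of the integrand is purely oscillatory), then write $\rho_\alpha(t)=1-\int_t^\infty\rho_\alpha'(s)\,ds$ and integrate by parts. Your remark about the cancellation coming from pairing the $\alpha$- and $(1-\alpha)$-products is exactly right---it is what kills the non-oscillatory $s^{-2}$ contribution---and the paper simply hides this under the phrase ``elementary manipulations.''
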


\begin{proof}
	The asymptotics \cite[9.2.5]{as}
	\begin{equation*}\label{eq:besselasy}
  	J_\nu(t)=\sqrt{\frac2{\pi t}} 
  	\left(\cos\left(t-\frac{\pi}4-\frac{\nu\pi}2\right)
  	-\frac{4\nu^2-1}{8t}\sin\left(t-\frac{\pi}4-\frac{\nu\pi}2\right)
  	+\mathcal O (t^{-2}) \right),
	\end{equation*}
	the formula \eqref{eq:rhoderivative} and elementary manipulations show that
	\begin{equation*}
		\rho_\alpha'(t)
		= \frac2{\pi t}
		\left(\sin\alpha\pi\sin2t + \frac{(2\alpha-1)^2\sin\alpha\pi}{4}\frac{\cos2t}{t} 
		+ \mathcal O (t^{-2}) \right).
	\end{equation*}
	Using that $\rho_\alpha(t) = 1-\int_t^\infty
        \rho_\alpha'(s)\,ds$ by Lemma \ref{rhoderivative},	we
        obtain the assertion by repeated integration by parts.
\end{proof}

\iffalse
Inserting the expansion of the sine integral \cite[5.2.8, 5.2.34,
5.2.35]{as},
\begin{equation}\label{sineexp}
  \int_t^\infty \frac{\sin 2s}{s}ds
=\frac{\pi}2-\Si t
=\frac{\cos
    2t}{2t}+\frac{\sin 2t}{4t^2} +\mathcal O(t^{-3}), 
\end{equation}
we end up with the desired expression. 
\fi

% =1+\frac{2\sin\alpha\pi}\pi\si 2t+\mathcal O (t^{-2}),
% \end{equation*}
% Now the sine integral $\si t$ (as defined in \cite[\S5.2]{as})
% oscillates around 0 with  an amplitude decaying as $t^{-1}$. This
% implies the existence of two infinite sequences $t_j^\pm\to\infty$ such that
% $\rho_\alpha(t_j^+)>1$ and $\rho_\alpha(t_j^-)<1$ for all
% $\alpha\in(0,1)$. 
% This property is well-known for
%  \begin{equation*}
%    \rho_{1/2}(t)=\frac2\pi\int_0^{2t}\frac{\sin s}sds.
%  \end{equation*}

%%%%%%%%%%%%%%%%%%%%%%%%%%%%%%%%%%%%%%%%%%%%%%%%%%%%%%%%%%%%%%%%

\subsection{Moments of the spectral density}\label{sec:mom}

For any $\gamma>-1$ let us define 
\begin{equation}\label{eq:sigma}
  \sigma_{\alpha,\gamma}(r):=
  \int_0^1(1-\lambda)^\gamma\rho_\alpha(\sqrt\lambda r) d\lambda,
  \qquad r\geq 0.
\end{equation}
We are interested in the asymptotic behavior of this quantity or, more precisely, in the way it approaches its limit.

\begin{theorem}\label{sigmathm}
	Let $0<\alpha<1$ and $\gamma>-1$. As $r\to\infty$,
  \begin{equation*}
    \sigma_{\alpha,\gamma}(r)=
		\frac1{\gamma+1} 
		-\Gamma(\gamma+1)\frac{\sin\alpha\pi}{\pi}
		\frac{\sin(2r-\frac12{\gamma\pi})}{r^{2+\gamma}}+
                \mathcal O(r^{-3-\gamma}).
  \end{equation*}
\end{theorem}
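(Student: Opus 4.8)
The plan is to read off the large-$r$ behaviour of $\sigma_{\alpha,\gamma}$ from a single explicitly evaluable oscillatory integral. Put $\psi_\alpha:=1-\rho_\alpha$, so that $\psi_\alpha(t)\to0$ by Lemma~\ref{rhoderivative}; since $\int_0^1(1-\lambda)^\gamma\,d\lambda=(\gamma+1)^{-1}$ we first write
\begin{equation*}
  \sigma_{\alpha,\gamma}(r)-\frac1{\gamma+1}=-\int_0^1(1-\lambda)^\gamma\,\psi_\alpha(\sqrt\lambda\,r)\,d\lambda .
\end{equation*}
Into the right-hand side I would feed the large-argument expansion of $\psi_\alpha$, obtained from the Bessel asymptotics \cite[9.2.5]{as} and \eqref{eq:rhoderivative} exactly as in the proof of Lemma~\ref{rem22}, but carried one step further,
\begin{equation*}
  \psi_\alpha(t)=\frac{\sin\alpha\pi}{\pi}\,\frac{\cos 2t}{t}
  +\Bigl(\frac{\sin\alpha\pi}{2\pi}-\frac{(2\alpha-1)^2\sin\alpha\pi}{4\pi}\Bigr)\frac{\sin 2t}{t^{2}}+\mathcal O(t^{-3}) ,
\end{equation*}
together with the observation that, to this order, the asymptotic series of $\rho_\alpha'$ (hence of $\psi_\alpha$) has no constant term: in the products $J_\mu J_\nu$ in \eqref{eq:rhoderivative} the non-oscillatory contributions carry a factor $\cos\bigl((\mu-\nu)\tfrac\pi2\bigr)$ at the orders that matter, which vanishes since $\mu-\nu=1$, and the remaining non-oscillatory contributions cancel between the two products.

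The main term comes from the first summand: after the substitution $\lambda=s^2$ it equals $-\frac{2\sin\alpha\pi}{\pi r}\int_0^1(1-s^2)^\gamma\cos(2rs)\,ds$, and the crucial input is the Poisson integral representation of the Bessel function \cite[9.1.20]{as}, which gives
\begin{equation*}
  \int_0^1(1-s^2)^\gamma\cos(2rs)\,ds=\frac{\sqrt\pi}{2}\,\Gamma(\gamma+1)\,r^{-\gamma-\frac12}\,J_{\gamma+\frac12}(2r)
\end{equation*}
for \emph{every} $\gamma>-1$ — exactly the range in the theorem, so that no separate discussion of small $\gamma$ is required. Inserting the asymptotics \cite[9.2.5]{as} of $J_{\gamma+1/2}(2r)$ and using $\cos\bigl(2r-\tfrac\pi4-(\gamma+\tfrac12)\tfrac\pi2\bigr)=\sin\bigl(2r-\tfrac{\gamma\pi}2\bigr)$ turns this into $-\Gamma(\gamma+1)\tfrac{\sin\alpha\pi}{\pi}\,r^{-\gamma-2}\sin\bigl(2r-\tfrac{\gamma\pi}2\bigr)+\mathcal O(r^{-\gamma-3})$, i.e.\ precisely the asserted leading correction. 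The $t^{-2}$ term of $\psi_\alpha$ produces, after $\lambda=s^2$, a multiple of $r^{-2}\int_0^1(1-s^2)^\gamma s^{-1}\sin(2rs)\,ds$, which is $\mathcal O(r^{-\gamma-3})$ by an integration by parts (or again by the same representation), and the $\mathcal O(t^{-3})$ tail is of the same size.

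The step I expect to be the real obstacle is the rigorous control of the remainder near $\lambda=0$, where the expansion of $\psi_\alpha$ is not available: on the region where $\sqrt\lambda\,r$ remains bounded the integrand is only $\mathcal O(1)$, so that part of the integral contributes $\mathcal O(r^{-2})$ — which exceeds the claimed remainder $\mathcal O(r^{-\gamma-3})$, and even the leading correction $r^{-\gamma-2}$, once $\gamma>0$. Hence one cannot simply discard a neighbourhood of the origin; the $\mathcal O(r^{-2})$ there must be shown to be cancelled by a matching $\mathcal O(r^{-2})$ concealed in the oscillatory integrals above. The cleanest way to arrange this, and the route I would take, is to keep an exact representation throughout: using $\rho_\alpha(0)=0$ (Lemma~\ref{rhoderivative}) and integrating by parts in $\lambda$ one obtains
\begin{equation*}
  \sigma_{\alpha,\gamma}(r)=\frac1{\gamma+1}\int_0^r\Bigl(1-\frac{s^2}{r^2}\Bigr)^{\gamma+1}\rho_\alpha'(s)\,ds ,
\end{equation*}
from which one subtracts the leading asymptotic pieces of $\rho_\alpha'$ — each, being purely oscillatory, becomes absolutely integrable after one further integration and is handled by the Poisson/Bessel evaluation above — leaving a genuinely $L^1$ remainder integrated against the fixed bounded kernel $(1-s^2/r^2)^{\gamma+1}$. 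Showing that this remainder contributes only $\mathcal O(r^{-\gamma-3})$ amounts to verifying that the two $\mathcal O(r^{-2})$ contributions annihilate; equivalently, that a certain conditionally convergent second moment of $\rho_\alpha'$ — which by \eqref{eq:rhoderivative} is a Weber--Schafheitlin-type integral $\int_0^\infty s^2\bigl(J_\alpha(s)J_{\alpha-1}(s)+J_{1-\alpha}(s)J_{-\alpha}(s)\bigr)\,ds$ — takes exactly the value forced by the $t^{-2}$-coefficient of $\psi_\alpha$. I expect this identity to be the heart of the proof, and the point where the (merely conditional) convergence of $\int_0^\infty\rho_\alpha'$ must be handled with care.
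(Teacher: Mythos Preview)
Your extraction of the leading correction via the Poisson representation of $J_{\gamma+1/2}$ is correct and elegant, but the treatment of the next order contains a genuine error. The claim that $r^{-2}\int_0^1(1-s^2)^\gamma s^{-1}\sin(2rs)\,ds=\mathcal O(r^{-\gamma-3})$ is false: that integral tends to $\pi/2$ as $r\to\infty$ (write $(1-s^2)^\gamma s^{-1}=s^{-1}+[(1-s^2)^\gamma-1]s^{-1}$; the second summand lies in $L^1(0,1)$ so Riemann--Lebesgue kills it, while the first gives $\Si(2r)\to\pi/2$), so the term is of genuine order $r^{-2}$---larger than the stated remainder for every $\gamma>-1$, and for $\gamma>0$ larger even than the leading correction $r^{-\gamma-2}$. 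This $r^{-2}$ is of the same nature as the $\mathcal O(r^{-2})$ you correctly anticipate from the region $\sqrt\lambda\,r=\mathcal O(1)$; all such pieces must conspire to cancel, and your proposal does not carry this out---the Weber--Schafheitlin moment identity you isolate as ``the heart of the proof'' is left as a conjecture.

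The paper sidesteps this computational cancellation by a structural device. It works with $\sigma_{\alpha,\gamma}'(r)$ (differentiate in $r$; recover $\sigma_{\alpha,\gamma}$ by integrating over $[r,\infty)$ at the end) and uses an \emph{exact} identity, Lemma~\ref{inteq}, to replace each product $J_{\nu+1/2}(kr)J_{\nu-1/2}(kr)$ arising from $\rho_\alpha'$ by a single $J_{2\nu}(2kr)$ under the integral, reducing matters to the large-$r$ behaviour of $\int_0^1 J_\nu(kr)\,k(1-k^2)^{\beta-1}\,dk$ for $\nu=\pm(2\alpha-1)$ and $\beta=\gamma+\tfrac32$. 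Lemma~\ref{decomplemma} then decomposes that integral, via the known asymptotics of ${}_1F_2$, as $f_{\beta,\nu}+g_{\beta,\nu}$, where $f_{\beta,\nu}=\mathcal O(r^{-\beta-1/2})$ carries the oscillation and $g_{\beta,\nu}=\mathcal O(r^{-2})$ is a non-oscillatory finite power series that is \emph{odd in $\nu$}. Since the two values $\nu=\pm(2\alpha-1)$ are exactly paired, the dangerous $g$ terms cancel by antisymmetry with no further computation. This is precisely the hidden $\mathcal O(r^{-2})$ cancellation you sensed; the paper makes it manifest as a parity rather than a moment identity.
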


For the proof we note that by Lemma \ref{rhoderivative} and dominated convergence
\begin{equation*}
	\lim_{r\to\infty} \sigma_{\alpha,\gamma}(r) 
	= \int_0^1 (1-\lambda)^\gamma \, d\lambda = \frac1{\gamma+1}.
\end{equation*}
In view of
\begin{equation*}\label{eq:sigmaproof12}
	\sigma_{\alpha,\gamma}(r) = \frac1{\gamma+1} - \int_r^\infty \sigma_{\alpha,\gamma}'(s)\, ds,
\end{equation*}
Theorem \ref{sigmathm} follows via integration by parts from

\begin{proposition}\label{sigmaderivative}
	Let $0<\alpha<1$ and $\gamma>-1$. As $r\to\infty$,
  \begin{equation}\label{eq:sigmaderivative}
    \sigma_{\alpha,\gamma}'(r)=
		\frac{\Gamma(\gamma+1)}{r^{\gamma+2}}\frac{\sin \alpha\pi}{\pi}
		\left( -2\cos\left(2r-\frac{\gamma\pi}2\right) 
		-\frac{d_1}{r}\sin\left(2r-\frac{\gamma\pi}2\right)
		+\mathcal O(r^{-2})\right)
  \end{equation}
  where $d_1:=\frac18(1-4((\gamma+\frac32)^2+(1-2\alpha)^2)$.
\end{proposition}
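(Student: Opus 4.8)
\textbf{Proof plan for Proposition \ref{sigmaderivative}.}

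The strategy is to differentiate the defining integral \eqref{eq:sigma} under the integral sign, reduce the resulting expression to an integral of the derivative $\rho_\alpha'$ against an explicit kernel, and then insert the known asymptotics of $\rho_\alpha'$ from the proof of Lemma \ref{rem22}. First I would compute $\sigma_{\alpha,\gamma}'(r)$. Writing $u=\sqrt\lambda\,r$ as new variable, one has $\sigma_{\alpha,\gamma}(r)=2r^{-2}\int_0^r(1-u^2/r^2)^\gamma\rho_\alpha(u)\,u\,du$; differentiating and then integrating by parts (using $\rho_\alpha(0)=0$ from Lemma \ref{rhoderivative}) should express $\sigma_{\alpha,\gamma}'(r)$ as a single integral $\int_0^r K_{\gamma}(u,r)\rho_\alpha'(u)\,du$ with an explicit kernel $K_\gamma$ that is a polynomial in $u/r$ times a power of $r$. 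Alternatively, and perhaps more cleanly, one differentiates $\sigma_{\alpha,\gamma}(r)=\frac1{\gamma+1}-\int_r^\infty\sigma_{\alpha,\gamma}'(s)\,ds$ is circular, so I would instead use Lemma \ref{rhoderivative} in the form $\rho_\alpha(\sqrt\lambda r)=\int_0^{\sqrt\lambda r}\rho_\alpha'(s)\,ds$, substitute into \eqref{eq:sigma}, exchange the order of integration, and carry out the $\lambda$-integral explicitly; the inner integral is an incomplete Beta-type integral that evaluates to elementary functions of $s/r$. Either way one arrives at a representation of $\sigma_{\alpha,\gamma}'(r)$ in which the only non-elementary ingredient is $\rho_\alpha'$.

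Next I would insert the two-term asymptotic expansion of $\rho_\alpha'$ established in the proof of Lemma \ref{rem22}, namely
\begin{equation*}
	\rho_\alpha'(t)=\frac2{\pi t}\left(\sin\alpha\pi\sin 2t+\frac{(2\alpha-1)^2\sin\alpha\pi}{4}\frac{\cos 2t}{t}+\mathcal O(t^{-2})\right),
\end{equation*}
and split the resulting integral over $(0,r)$ (or over the rescaled variable) into a main oscillatory part plus remainder. The oscillatory integrals that appear are of the form $\int_0^r (\text{rational in }s/r)\, s^{-1}\sin 2s\,ds$ and similarly with $\cos 2s$ and an extra factor $s^{-1}$; these are evaluated by repeated integration by parts in $s$, each integration producing one more power of $r^{-1}$ from the boundary term at $s=r$ and leaving a lower-order integral. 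The contributions from the lower endpoint and from the interior are of lower order because the rational kernel and its derivatives are bounded there. Collecting the leading boundary terms gives the $-2\cos(2r-\gamma\pi/2)$ term with its $\Gamma(\gamma+1)$ prefactor (the Gamma function entering through the constant $\int_0^1(1-\lambda)^\gamma\cdots$ that survives after the substitution, or equivalently through $\lim_{r\to\infty}r^{2+\gamma}\times(\text{kernel near }s=r)$), and the next order gives the $d_1$-term; one then checks that $d_1$ takes the stated value $\frac18(1-4((\gamma+\tfrac32)^2+(1-2\alpha)^2))$ by bookkeeping the two sources of $r^{-1}$ corrections, namely the second term in the $\rho_\alpha'$ expansion and the Taylor expansion of the kernel near $s=r$.

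The main obstacle will be the bookkeeping in this last step: one has to track $r^{-1}$-corrections coming simultaneously from (i) the subleading term in the Bessel asymptotics of $\rho_\alpha'$, (ii) the expansion of the polynomial kernel $K_\gamma(s,r)$ and its $s$-derivatives near the endpoint $s=r$, and (iii) the phase shifts produced when $\sin 2s,\cos 2s$ at $s=r$ are re-expressed in terms of $\sin(2r-\gamma\pi/2),\cos(2r-\gamma\pi/2)$. Getting the precise coefficient $d_1$ requires doing all three consistently; the structure $(\gamma+\tfrac32)^2+(1-2\alpha)^2$ suggests that (i) contributes the $(1-2\alpha)^2$ and (ii)--(iii) contribute a $(\gamma+\tfrac32)^2$ after completing the square, which is a useful consistency check. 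The convergence of the remainder integrals and the legitimacy of exchanging sum/integral are routine given the uniform bounds on Bessel functions and their derivatives already implicit in Lemma \ref{rhoderivative} and its proof.
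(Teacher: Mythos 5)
Your plan takes a genuinely different route from the paper. The paper does \emph{not} insert the Bessel asymptotics of $\rho_\alpha'$ into the integral. Instead it first writes $\sigma_{\alpha,\gamma}'(r)=2\int_0^1 k^2(1-k^2)^\gamma\rho_\alpha'(kr)\,dk$, substitutes the exact product formula for $\rho_\alpha'$ from Lemma~\ref{rhoderivative}, and then uses the \emph{exact} Bessel integral identity of Lemma~\ref{inteq} to reduce the expression to
\begin{equation*}
2c_\gamma\int_0^1\bigl(J_{2\alpha-1}(2kr)+J_{1-2\alpha}(2kr)\bigr)k(1-k^2)^{\gamma+1/2}\,dk.
\end{equation*}
Each of these integrals is a ${}_1F_2$ hypergeometric function, whose large-argument asymptotics (Lemma~\ref{decomplemma}) splits into an oscillatory part $f_{\beta,\nu}$ and a power-like, non-oscillatory part $g_{\beta,\nu}$ that is \emph{odd in $\nu$}. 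The $g$-parts cancel exactly between $\nu=2\alpha-1$ and $\nu=1-2\alpha$, and the $f$-parts add to produce~\eqref{eq:sigmaderivative}. This cancellation is essential: for $\gamma>0$ (i.e.\ $\beta>\tfrac32$) the non-oscillatory $g_{\beta,\nu}\sim r^{-2}$ would dominate the oscillatory $r^{-\gamma-2}$ term, so without the exact antisymmetry the result would be false.

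This is where your plan has a genuine gap rather than merely ``bookkeeping''. The two-term expansion of $\rho_\alpha'$ carries an $\mathcal O(t^{-2})$ remainder inside the parentheses, i.e.\ an absolute error $E(t)=\mathcal O(t^{-3})$ as $t\to\infty$. After substituting $t=kr$ and integrating, a naive bound on the $E$-contribution gives $\mathcal O(r^{-3})$ (coming from the region $t=kr=\mathcal O(1)$, where $k^2E(kr)\sim$ const$\cdot r^{-2}$ on an interval of length $\sim r^{-1}$, and from the bulk). But the target error $\mathcal O(r^{-\gamma-4})$ is \emph{smaller} than $r^{-3}$ for every $\gamma>-1$, and the main term $r^{-\gamma-2}$ is itself smaller than $r^{-3}$ once $\gamma>1$. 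So you cannot simply discard $E$; you must show that its integrated contribution is both oscillatory and of the right smaller order, which in turn requires knowing that all non-oscillatory pieces in the full asymptotic expansion of $J_\alpha J_{\alpha-1}+J_{1-\alpha}J_{-\alpha}$ cancel exactly --- precisely the structural fact that the paper extracts cleanly as the $\nu\mapsto-\nu$ antisymmetry of $g_{\beta,\nu}$, and which is not visible from a truncated expansion plus an unstructured $\mathcal O(t^{-3})$ error. Relatedly, the asymptotics of $\rho_\alpha'$ is not valid near $kr=0$, where $\rho_\alpha'(t)$ behaves like $t^{-|1-2\alpha|}$ while your substituted expansion blows up like $t^{-2}$; any cut-off $k=\delta$ or $k=T/r$ introduces boundary contributions of order $r^{-2}$ or $r^{-3}$ that must be shown to cancel across the cut, which is delicate. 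None of this is fatal in principle, but it is substantially more than ``routine,'' and the parity/antisymmetry cancellation is the key missing idea that makes the paper's route work.

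One further small point: the parity structure you would rely on at the lower endpoint (the integrands $k(1-k^2)^\gamma\sin 2kr$ and $(1-k^2)^\gamma\cos 2kr$ producing no endpoint contribution at $k=0$) is indeed correct and would be a useful ingredient, but you should state it explicitly, as it is precisely what prevents a spurious non-oscillatory $r^{-2}$ term from the subleading piece of $\rho_\alpha'$.
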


We defer the rather technical proof of this proposition to the
following subsection. We would like to point out that
\eqref{eq:sigmaderivative} can be proved in an elementary way if
$\alpha=\frac12$ due to the formula \eqref{eq:rhohalf}. Indeed, if we
simplify further by taking  $\gamma=1$, then 
\begin{align*}
	\sigma_{1/2,1}'(r) 
	& = \frac2{\pi r} \int_0^1 (1-\lambda) \sin (2\sqrt\lambda r) \, d\lambda 
 	= \frac4{\pi r} \int_0^1 k(1-k^2) \sin 2kr \, dk \\
	& = -\frac2{\pi r} \left(\frac{\sin 2r}{r^2} + \frac{3\cos 2r}{2r^3}
  + \mathcal O({r^{-4}})\right) 
\end{align*}
by integration by parts. This is \eqref{eq:sigmaderivative} in this
special case.

%%%%%%%%%%%%%%%%%%%%%%%%%%%%%%%%%%%%%%%%%%%%%%%%%%%%%%%%%%%%%%%%%%%

\subsection{Proof of Proposition \ref{sigmaderivative}}

We shall need

\begin{lemma} \label{inteq}
  Let $\nu>-1/2$ and $\gamma>-1$. Then
  \begin{equation*}
    \int_0^1
    J_{\nu+1/2}(kr)J_{\nu-1/2}(kr)k^2(1-k^2)^\gamma \,dk  
    =c_{\gamma}\int_0^{1}
    J_{2\nu}(2kr)k (1-k^2)^{\gamma+1/2}dk
    %\qquad \re \nu>0, \re\gamma>-1,
  \end{equation*}
  with
  \begin{equation}\label{eq:cgamma}
    %  c_{\nu,\gamma}=
    %  \frac{2^{1-2\nu}\Gamma(2\nu)}{\Gamma(\nu)\Gamma(\nu+\frac12)}
    %  \frac{\Gamma(\gamma+1)}{\Gamma(\gamma+\frac32)}.
    c_{\gamma}
    :=\frac1{\sqrt\pi}\frac{\Gamma(\gamma+1)}{\Gamma(\gamma+\frac32)}.
  \end{equation}
\end{lemma}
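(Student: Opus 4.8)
The plan is to recognize the left-hand side as a known Bessel product integral and reduce it to the right-hand side via a classical quadratic/product formula. First I would recall the product formula for Bessel functions of consecutive half-integer-shifted orders. The combination $J_{\nu+1/2}(z)J_{\nu-1/2}(z)$ is precisely the kind of product that admits an integral representation: one has (see, e.g., Watson's treatise or \cite[11.4]{luke}) the formula
\begin{equation*}
  J_{\nu+1/2}(z)J_{\nu-1/2}(z)
  = \frac{2}{\sqrt\pi}\,\frac{1}{\Gamma(\frac12)}
  \int_0^{\pi/2} J_{2\nu}(2z\cos\phi)\,d\phi
\end{equation*}
up to the correct normalizing constants; more to the point, there is the standard reduction expressing $J_\mu(z)J_\nu(z)$ as a single integral of $J_{\mu+\nu}(2z\sin t)$ against a power of $\cos t$, valid for $\Rp(\mu+\nu)>-1$. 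Applying this with $\mu=\nu+1/2$, $\nu\to\nu-1/2$ gives $J_{\nu+1/2}(kr)J_{\nu-1/2}(kr)$ as an integral of $J_{2\nu}(2kr\sin t)$ against $(\cos t)^{\text{something}}$; the hypothesis $\nu>-1/2$ is exactly what makes $2\nu>-1$ so that $J_{2\nu}$ is integrable near the origin and the formula applies.

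Next I would substitute this representation into the left-hand integral, interchange the order of integration (justified by absolute convergence, using the integrability of $J_{2\nu}$ near $0$ and the compact $k$-interval), and perform the $k$-integration first. That inner integral has the shape $\int_0^1 J_{2\nu}(2kr\sin t)\,k^2(1-k^2)^\gamma\,dk$ or, after absorbing $\sin t$ by rescaling, reduces the double integral to a single one of the form $\int_0^1 J_{2\nu}(2ks)\,k(1-k^2)^{\gamma+1/2}\,ds$-type expression in the variable on the right. The powers of $k$ and of $1-k^2$ have to match up correctly: starting from $k^2(1-k^2)^\gamma$ one extra power of $k$ is consumed by the Jacobian of the product formula and one factor of $(1-k^2)^{1/2}$ appears from integrating out the angular variable $t$, landing on $k(1-k^2)^{\gamma+1/2}$ as claimed. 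The constant $c_\gamma$ then comes out as the ratio of Gamma factors from the Beta-integral $\int_0^{\pi/2}(\cos t)^{2\gamma+1}\,dt = \frac12 B(\gamma+1,\tfrac12) = \frac{\sqrt\pi}{2}\frac{\Gamma(\gamma+1)}{\Gamma(\gamma+\frac32)}$, which matches \eqref{eq:cgamma} after tracking the remaining normalization.

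The main obstacle is bookkeeping rather than conceptual: getting the exact product formula in the right normalization (there are several conventions for the integral representation of $J_\mu J_\nu$, and the half-integer offset between the two orders is what makes the auxiliary integrand collapse to a pure power of $\cos t$ with no residual Bessel or Gegenbauer factor), and then carefully matching all powers of $2$, $r$, $k$ and $\sin t/\cos t$ through the change of variables so that the right-hand side emerges verbatim with constant $c_\gamma$ given by \eqref{eq:cgamma}. An alternative, perhaps cleaner, route avoiding any interchange of integrals is to verify the identity by expanding both sides in Maclaurin series in $r$: each side is, after termwise integration, a hypergeometric series in $r^2$, and one checks that the two $\,{}_1F_2$ (or ${}_2F_3$) series coincide coefficient-by-coefficient, the equality of coefficients again reducing to a Gamma-function identity that pins down $c_\gamma$. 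I would present whichever of these is shorter, but expect the product-formula approach to be the one intended given the structure of the surrounding argument.
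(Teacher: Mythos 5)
Your primary approach (Neumann's product integral plus Fubini) is correct and genuinely different from the paper's. The paper expands both $J_{\nu+1/2}(t)J_{\nu-1/2}(t)$ and $J_{2\nu}(2t)$ into their power series in $t$, integrates termwise against $k^{a}(1-k^2)^{b}$ using the Beta integral $2\int_0^1 k^{2a-1}(1-k^2)^{b-1}dk=B(a,b)$, and then checks equality of the resulting series coefficient-by-coefficient via the Legendre duplication formula $\sqrt\pi\,\Gamma(2a)=2^{2a-1}\Gamma(a)\Gamma(a+\tfrac12)$. That is exactly your proposed "alternative" route. Your main route instead invokes Neumann's formula $J_\mu(z)J_\nu(z)=\tfrac2\pi\int_0^{\pi/2}J_{\mu+\nu}(2z\cos\phi)\cos((\mu-\nu)\phi)\,d\phi$, which with the half-integer offset collapses to $J_{\nu+1/2}(z)J_{\nu-1/2}(z)=\tfrac2\pi\int_0^{\pi/2}J_{2\nu}(2z\cos\phi)\cos\phi\,d\phi$ (valid precisely when $\nu>-\tfrac12$, as you observe); substituting, passing to Cartesian coordinates $(u,v)=(k\cos\phi,k\sin\phi)$ on the quarter-disc turns the integrand into $J_{2\nu}(2ru)\,u\,(1-u^2-v^2)^\gamma$, and the $v$-integral yields the factor $(1-u^2)^{\gamma+1/2}\cdot\tfrac{\sqrt\pi}{2}\Gamma(\gamma+1)/\Gamma(\gamma+\tfrac32)$, giving exactly the right-hand side with $c_\gamma$ as in \eqref{eq:cgamma}. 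Both arguments are short and elementary; the paper's avoids any interchange of integrals (replacing it by termwise summation of absolutely convergent series), whereas yours is more geometric, makes the appearance of the half-power $(1-k^2)^{\gamma+1/2}$ transparent, and shows at a glance where the constraint $\nu>-\tfrac12$ enters. One small correction to your narrative: the product formula brings in $\cos\phi$, not $\sin t$, and the cleanest bookkeeping is the quarter-disc change of variables above rather than "performing the $k$-integration first."
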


\begin{proof}
  By virtue of the series representations \cite[9.1.10, 9.1.14]{as}
  for $\alpha>0$,
  \begin{align*}
    J_{\alpha-1}(t)&=
    \left(\frac{t}2\right)^{\alpha-1}\sum_{n=0}^\infty
    \frac{(-1)^n}{n!\Gamma(\alpha+n)}
    \left(\frac{t}2\right)^{2n},  \\
    J_{\alpha}(t)J_{\alpha-1}(t)&=
    \left( \frac{t}2\right)^{2\alpha-1} \sum_{n=0}^\infty
    \frac{(-1)^n\Gamma(2\alpha+2n)}{n!\Gamma(\alpha+n)\Gamma(\alpha+n+1)
      \Gamma(2\alpha+n)} \left(\frac{t}2\right)^{2n},
  \end{align*}
  the statement is equivalent to
  \begin{multline}\label{eq11}
    \left(\frac{r}2\right)^{2\nu-1}\sum_{n=0}^\infty
    \frac{(-1)^n(r/2)^{2n}\Gamma(2\nu+2n)}
    {n!\Gamma(2\nu+n)\Gamma(\nu+n)\Gamma(\nu+n+1)}
    \int_0^1 k^{2\nu+2n+1}(1-k^2)^\gamma dk
    %\int_0^{\pi/2}\cos^{2\gamma+1}\theta\sin^{2\nu+2n+1}\theta\,d\theta
    \\ =
    %\left(\frac{r}2\right)^{2\nu-1} 
    %\frac{\Gamma(2\nu)}{\Gamma(\nu)\Gamma(\nu+\frac12)}
    \frac{r^{2\nu-1}}{\sqrt\pi}
    \frac{\Gamma(\gamma+1)}{\Gamma(\gamma+\frac32)}
    \sum_{n=0}^\infty
    \frac{(-1)^nr^{2n}}{n!\Gamma(2\nu+n)}
    \int_0^1k^{2\nu+2n}(1-k^2)^{\gamma+1/2} dk,
    %\int_0^{\pi/2}
    %\cos^{2\gamma+2}\theta\sin^{2\nu+2n}\theta\,d\theta.
  \end{multline}
an equality which actually holds termwise. This follows by the beta function
  identity
  \begin{equation*}
    2 \int_0^1 k^{2\alpha-1}(1-k^2)^{\beta-1} dk =
    \int_0^1 t^{\alpha-1}(1-t)^{\beta-1}dt =B(\alpha,\beta)
    =\frac{\Gamma(\alpha)\Gamma(\beta)}{\Gamma(\alpha+\beta)}
    %  \int_0^{\pi/2}\cos^{2\alpha-1}\theta\sin^{2\beta-1}\theta\,d\theta
    %=\frac{B(\alpha,\beta)}2
    %=\frac{\Gamma(\alpha)\Gamma(\beta)}{2\Gamma(\alpha+\beta)},
  \end{equation*}
  and the duplication formula
	$\sqrt\pi\Gamma(2\alpha)=2^{2\alpha-1} \Gamma(\alpha)\Gamma\left(\alpha+\frac12\right)$
	\cite[6.1.18]{as}. 
\end{proof}

\iffalse
  Indeed, by the beta function
  identity
  \begin{equation*}
    2 \int_0^1k^{2\alpha-1}(1-k^2)^{\beta-1} dk =
    \int_0^1t^{\alpha-1}(1-t)^{\beta-1}dt =B(\alpha,\beta)
    =\frac{\Gamma(\alpha)\Gamma(\beta)}{\Gamma(\alpha+\beta)}
    %  \int_0^{\pi/2}\cos^{2\alpha-1}\theta\sin^{2\beta-1}\theta\,d\theta
    %=\frac{B(\alpha,\beta)}2
    %=\frac{\Gamma(\alpha)\Gamma(\beta)}{2\Gamma(\alpha+\beta)},
  \end{equation*}
  for $\alpha,\beta>0$ Eq.~\eqref{eq11} is true provided
  \begin{equation*}
    \frac{2^{-2\nu-2n+1}\Gamma(2\nu+2n)B(\gamma+1,\nu+n+1)}
	 {\Gamma(\nu+n)\Gamma(\nu+n+1)}
	 =\frac{\Gamma(\gamma+1)B(\gamma+\frac32, \nu+n+\frac12)}
	 {\sqrt\pi\Gamma(\gamma+\frac32)}
  \end{equation*}
  for all $n\ge 0$. This follows from the duplication formula
  \cite[6.1.18]{as}
  \begin{equation*}
    %  \label{duplform}
    \Gamma(2\alpha)=\frac{2^{2\alpha-1}}{\sqrt
      \pi}\Gamma(\alpha)\Gamma\left(\alpha+\frac12\right). 
  \end{equation*}
\fi

\begin{lemma}\label{decomplemma}
  Let $\beta>0$ and $-2<\nu<2$. The following decomposition holds:
  \begin{equation}\label{eq10}
    \int_0^1J_\nu(kr)k(1-k^2)^{\beta-1} dk
    =f_{\beta,\nu}(r)+g_{\beta,\nu}(r),
  \end{equation}
  where $g_{\beta,-\nu}(r)=-g_{\beta,\nu}(r)$ and
  \begin{equation*}
    f_{\beta,\nu}(r)=
    \frac{\Gamma(\beta)}{2\sqrt\pi}\left(\frac2r\right)^{\beta+1/2}\left[
      \cos(r-r_0)+\frac{d_1}r\sin(r-r_0)+\mathcal O(r^{-2})\right],
    %\cos\left(r-\frac\pi2\left(\beta+\nu+\frac12\right)\right)(1+\mathcal
    %O(r^{-1})),
    \qquad r\to\infty,
  \end{equation*}
  with 
%$r_0:=\frac\pi 2 \left(\beta+\nu+\frac12\right)$ and
%  $d_1:=\frac{1}8-\frac12(\beta^2+\nu^2)$. 
  \begin{equation}\label{eq:r0d1}
    r_0:=\frac\pi 2 \left(\beta+\nu+\frac12\right)
    \quad\text{and}\quad
    d_1:=\frac{1}8-\frac{\beta^2+\nu^2}2.
  \end{equation}
\end{lemma}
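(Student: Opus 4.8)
The plan is to isolate the leading oscillatory behaviour of the integral $\int_0^1 J_\nu(kr)k(1-k^2)^{\beta-1}\,dk$ by recognising it as a Hankel-type transform whose large-$r$ asymptotics are governed by the behaviour of the weight $(1-k^2)^{\beta-1}$ near the endpoint $k=1$ together with the oscillatory kernel near $k=1$. Concretely, I would first recall the exact evaluation of this integral in terms of a Bessel function of larger index, namely the Sonine–Gegenbauer type formula
\begin{equation*}
	\int_0^1 J_\nu(kr)k(1-k^2)^{\beta-1}\,dk = 2^{\beta-1}\Gamma(\beta)\, r^{-\beta}\, J_{\nu+\beta}(r),
\end{equation*}
valid for $\Rp\beta>0$, $\Rp\nu>-1$ (see e.g. Watson or \cite[11.4.10]{as}). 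This reduces the whole problem to inserting the two-term asymptotic expansion \cite[9.2.5]{as} of $J_{\nu+\beta}(r)$,
\begin{equation*}
	J_\mu(r)=\sqrt{\tfrac2{\pi r}}\left(\cos(r-\tfrac\pi4-\tfrac{\mu\pi}2)-\tfrac{4\mu^2-1}{8r}\sin(r-\tfrac\pi4-\tfrac{\mu\pi}2)+\mathcal O(r^{-2})\right)
\end{equation*}
with $\mu=\nu+\beta$, which immediately produces a leading term of the stated form with phase $r-r_0$, $r_0=\tfrac\pi4+\tfrac{(\nu+\beta)\pi}2=\tfrac\pi2(\beta+\nu+\tfrac12)$, exactly matching \eqref{eq:r0d1}.

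However, the restriction to $\nu>-1$ inherent in that formula is not enough: the statement of the lemma requires $-2<\nu<2$, so the case $-2<\nu\leq-1$ must be handled separately. Here the plan is to use the recurrence $J_{\nu}(t)=\tfrac{2(\nu+1)}{t}J_{\nu+1}(t)-J_{\nu+2}(t)$, or equivalently to integrate by parts once in $k$, so as to express the integral for index $\nu$ in terms of integrals for indices $\nu+1$ and $\nu+2$, both of which lie in the admissible range $\nu>-1$ and hence are covered by the Sonine formula. After this reduction one again substitutes the asymptotics of the relevant Bessel functions; the bookkeeping of the shifted phases (using $\cos(r-\tfrac\pi4-\tfrac{(\mu+1)\pi}2)=\sin(r-\tfrac\pi4-\tfrac{\mu\pi}2)$ etc.) produces once more a term $f_{\beta,\nu}$ of precisely the claimed form, and collects into the remainder $g_{\beta,\nu}$ those contributions which are manifestly odd under $\nu\mapsto-\nu$.

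To pin down the second-order coefficient $d_1$ and to verify the claimed antisymmetry, I would argue as follows. Write the integral as $\tfrac12\int_{-1}^1 J_\nu(kr)k(1-k^2)^{\beta-1}\,dk$ after noting $J_\nu(-kr)k=$ — no, this requires care since $J_\nu$ is not even for non-integer $\nu$; instead I would split $(1-k^2)^{\beta-1}=((1-k)(1+k))^{\beta-1}$ and perform a Watson-type contour/endpoint analysis: the full asymptotic series comes from the endpoint $k=1$ (where $(1-k^2)^{\beta-1}$ is singular for $\beta<1$ and where $J_\nu(kr)$ oscillates) and has the shape $r^{-\beta-1/2}$ times an oscillatory series in $r-r_0$; the symmetric part of this series under $\nu\mapsto-\nu$ is what is denoted $f_{\beta,\nu}$ and the antisymmetric part is absorbed into $g_{\beta,\nu}$. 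The antisymmetry $g_{\beta,-\nu}=-g_{\beta,\nu}$ is then seen directly by comparing the expansions for $\nu$ and $-\nu$: since $J_{-\nu}$ and $J_\nu$ have the same leading asymptotic up to a shift $\nu\to-\nu$ in the phase and in the coefficient $4\nu^2-1$, the $\nu$-dependence that survives in $f_{\beta,\nu}$ is through $\nu^2$ only (hence $d_1$ depends on $\nu$ through $\nu^2$, giving $d_1=\tfrac18-\tfrac{\beta^2+\nu^2}2$ once the $\mu=\nu+\beta$ coefficient $-\tfrac{4(\nu+\beta)^2-1}{8}$ is symmetrised), while the odd-in-$\nu$ discrepancy is precisely $g_{\beta,\nu}$.

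The main obstacle I expect is the case $-2<\nu\leq-1$ and the careful tracking of which pieces are symmetric versus antisymmetric in $\nu$: the Sonine formula is clean, but the recurrence-based reduction introduces several terms at order $r^{-1}$ relative to the leading one, and one must check that after symmetrisation they reorganise into exactly the single coefficient $d_1$ of \eqref{eq:r0d1} with no leftover symmetric contribution — equivalently, that the $\mathcal O(r^{-2})$ error genuinely absorbs everything else. A convenient way to sidestep much of this is to observe that the identity of the lemma only asserts existence of such a decomposition with the stated $f_{\beta,\nu}$ and \emph{some} odd remainder, so one may simply define $g_{\beta,\nu}(r):=\tfrac12\big(I_{\beta,\nu}(r)-I_{\beta,-\nu}(r)\big)$ up to the symmetric remainder and verify that $\tfrac12(I_{\beta,\nu}+I_{\beta,-\nu})=f_{\beta,\nu}+\mathcal O(r^{-\beta-5/2})$; this reduces the whole lemma to the single asymptotic computation for $\mu=\nu+\beta$ (and $\mu=-\nu+\beta$) via the Sonine formula when both lie above $-1$, with the genuinely delicate range being only $\beta<1$, $\nu$ near $-2$, where one extra application of the recurrence restores applicability.
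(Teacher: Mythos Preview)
Your central identity is wrong. The Sonine integral (AS 11.4.10) requires the power of $k$ to match the Bessel order,
\begin{equation*}
\int_0^1 J_\mu(kr)\,k^{\mu+1}(1-k^2)^{\beta-1}\,dk = 2^{\beta-1}\Gamma(\beta)\,r^{-\beta} J_{\mu+\beta}(r),
\end{equation*}
so your claimed formula is correct only for $\nu=0$. A quick test: for $\nu=1$, $\beta=1$ the left side is $\int_0^1 J_1(kr)k\,dk = r/6+O(r^3)$ while $r^{-1}J_2(r)=r/8+O(r^3)$. For general $\nu$ the integral is a genuine ${}_1F_2$ hypergeometric, not a single Bessel function, and this is precisely why a nontrivial $g_{\beta,\nu}$ is needed: the large-$r$ expansion of ${}_1F_2$ carries, besides the oscillatory $r^{-\beta-1/2}$ piece, an algebraic ${}_3F_0$-type tail decaying like $r^{-2}$ (so dominating the oscillatory part when $\beta>3/2$), and it is this tail that is odd in $\nu$.

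Your fallback of setting $g_{\beta,\nu}:=\tfrac12(I_{\beta,\nu}-I_{\beta,-\nu})$ also fails to prove the lemma as stated: the resulting $f$ would be the even part $\tfrac12(I_{\beta,\nu}+I_{\beta,-\nu})$, hence even in $\nu$, whereas the asserted $f_{\beta,\nu}$ has phase $r_0=\tfrac\pi2(\beta+\nu+\tfrac12)$ depending linearly on $\nu$. (That weaker even-part statement does suffice for the application in Proposition~\ref{sigmaderivative}, where only $f_{\beta,\nu}+f_{\beta,-\nu}$ enters; but it is not the lemma.) The paper's route is different: it writes the integral via Luke's formula as a constant times ${}_1F_2$, invokes the standard large-argument expansion of ${}_1F_2$, which canonically splits into an oscillatory exponential sum and a formal ${}_3F_0$ power series, and \emph{defines} $g_{\beta,\nu}$ as a finite truncation of the latter. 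The antisymmetry is then read off directly: the ${}_3F_0$ coefficients involve $(1+\nu/2)_n(1-\nu/2)_n=\prod_{j=0}^{n-1}((j+1)^2-\nu^2/4)$, which is even in $\nu$, multiplied by an overall prefactor $\nu$.
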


The function $g_{\beta,\nu}$ has a power-like, non-oscillatory
  behavior at infinity and dominates $f_{\beta,\nu}$ when
  $\beta>\frac32$. Its   odd parity with respect to $\nu$  is,
  however, the property vital to   us, since it leads to a useful
  cancellation. 

\begin{proof}
  The left-hand side in \eqref{eq10} can be expressed  as a
  generalized hypergeometric function. Recall that for $p,q\in\N_0$,
  $p\leq q$ and
  $\alpha_1,\ldots,\alpha_p,\beta_1,\ldots,\beta_q\in\R\setminus(-\N_0)$,
  \begin{equation*}
    {}_pF_q\left(\left.
    \begin{array}{c}
      \alpha_1,\alpha_2,\ldots,\alpha_p \\
      \beta_1,\beta_2,\ldots,\beta_q
    \end{array}\right|
     z\right)=\sum_{n=0}^\infty
    \frac{(\alpha_1)_n(\alpha_2)_n \cdots (\alpha_p)_n}
	 {(\beta_1)_n(\beta_2)_n\cdots (\beta_q)_n}
    \frac{z^n}{n!},
    \qquad z\in\C,
  \end{equation*}
  where $(\alpha)_n=\Gamma(\alpha+n)/\Gamma(\alpha)
  =\alpha(\alpha+1)\cdots(\alpha+n-1)$. By \cite[13.3.2(10)]{luke},
  \begin{align*}
&    \int_0^1J_\nu(kr)k(1-k^2)^{\beta-1} dk
    = \int_0^{\pi/2} J_\nu(r\sin\theta)\cos^{2\beta-1}\theta\sin\theta\,
    d\theta \\ =
&    \frac{r^\nu \Gamma(\beta)\Gamma(\nu/2+1)
    }{2^{\nu+1}\Gamma(\nu+1)\Gamma(\beta+\nu/2+1)} {}_1F_2\left(
    \left. \begin{array}{c}
      \nu/2+1\\ \nu+1, \beta+\nu/2+1
    \end{array}\right| 
     -\frac{r^2}4 \right).
  \end{align*}
  Next we use the asymptotics of the generalized hypergeometric
  function \cite[1.3.3(5, 7, 8, 13)]{luke},
  \begin{align*}
&    \frac{\Gamma(\nu/2+1)}{\Gamma(\nu+1)\Gamma(\beta+\nu/2+1)} {}_1F_2\left(
    \left. \begin{array}{c}
      \nu/2+1\\ \nu+1, \beta+\nu/2+1
    \end{array}\right| 
     -\frac{r^2}4 \right) \\ \sim &
    \frac{(2/r)^{\beta+\nu+1/2}}{2\sqrt\pi} \left[
      e^{i(r-r_0)}\sum_{n=0}^\infty d_n\left({ir}\right)^{-n}
      +e^{-i(r-r_0)}\sum_{n=0}^\infty d_n\left(-{ir}\right)^{-n}
      \right] \\
  &  + \frac{\nu(2/r)^{\nu+2}}{2\Gamma(\beta)}
    {}_3F_0\left(\left. 
    \begin{array}{c}
      1+\nu/2, 1-\nu/2,1-\beta \\
      -
    \end{array}
    \right|-\frac4{r^2} \right),
    \qquad  r\to\infty,
  \end{align*}
  with $r_0$, $d_1$ as in \eqref{eq:r0d1} and $d_0=1$. We emphasize
  that the $\sim$ sign means equality in the sense of asymptotic
  expansions, and that ${}_3F_0$ is not a well-defined function but
  denotes an asymptotic expansion. We define $\nu^{-1}r^2g_{\beta,\nu}(r)$ as
  a finite approximation to ${}_3F_0$. Namely, 
  let $K$ denote the smallest non-negative integer such that
  $2K\geq\beta-\frac32$ and put
  \begin{equation*}
    g_{\beta,\nu}(r)
    := \frac\nu{r^2}
    \sum_{n=0}^K (1+\nu/2)_n(1-\nu/2)_n (1-\beta)_n
    \frac{(-1)^n}{n!} \left(\frac2r\right)^{2n}.
  \end{equation*}
  This function is antisymmetric in $\nu$ and
%  If we subtract $g_{\beta,\nu}$ from \eqref{eq12}, we obtain
  \begin{align*}
    f_{\beta,\nu}(r)
    := & \int_0^1J_\nu(kr)k(1-k^2)^{\beta-1} dk-g_{\beta,\nu}(r)\\
    = & 
%\frac{r^\nu\Gamma(\beta)}{2^{\nu+1}}\left[
%      K_{1,2}\left(-\frac{r^2}4+i0\right)
%      +K_{1,2}\left(-\frac{r^2}4-i0\right) \right]\\    =&
    \frac{r^\nu\Gamma(\beta)}{2^{\nu+1}}
    \frac{(2/r)^{\beta+\nu+1/2}}{2\sqrt\pi} \left[
      e^{i(r-r_0)} \left(d_0+\frac{d_1}{ir}\right)
      +e^{-i(r-r_0)} \left(d_0-\frac{d_1}{ir}\right) \right. \\
    &  \left. \phantom{\frac1\qquad} + \mathcal O(r^{-2}) \right] 
    + \mathcal O\left(r^{-2(K+2)}\right) \\
    =&\frac{\Gamma(\beta)}{2\sqrt\pi}\left(\frac2{r}\right)^{\beta+1/2}
    \left[
      d_0\cos(r-r_0)
      +\frac{d_1}r\sin(r-r_0)
      %-\frac{4d_2}{r^2}\cos(r-r_0)
      +\mathcal O (r^{-2})\right]
  \end{align*}
as claimed.
\end{proof}

\iffalse
\begin{corollary}\label{cor26}
    Let $\beta>0$ and $-1\le \nu\le 1$. Then as $r\to\infty$,
    \begin{multline*}
      \int_0^1(J_\nu(kr)+J_{-\nu}(kr))k(1-k^2)^{\beta-1}dk
      =\frac{\Gamma(\beta)}{\sqrt\pi}\left(\frac2r\right)^{\beta+1/2}
      \cos\frac{\nu\pi}2 \\ \times \left[
        \cos\left(r-\frac\pi 2\left(\beta+\frac12\right)\right)
        +\frac{d_1}r \sin\left(r-\frac\pi
          2\left(\beta+\frac12\right)\right)
        +\mathcal O(r^{-2})\right].
    \end{multline*}
\end{corollary}
\begin{proof}
  In the case $\nu=\pm 1$ it is enough to recall that
  $J_{-1}(t)=-J_1(t)$ \cite[9.1.5]{as}; for $-1<\nu<1$ the proof is a simple
  calculation.
\end{proof}
\fi

Now everything is in place for the

\begin{proof}[Proof of Proposition~\ref{sigmaderivative}]
  By Lemmas \ref{rhoderivative} and \ref{inteq},
  %Using Lemma~\ref{inteq} followed by Corollary~\ref{cor26} we have
  \begin{align*}
    %  \sigma'_{\alpha,\gamma}(r)=
    \sigma'_{\alpha,\gamma}(r)
    & = 2\int_0^1k^2(1-k^2)^\gamma\rho_\alpha'(kr)dk \\ 
    & =
    2\int_0^1(J_\alpha(kr)J_{\alpha-1}(kr)
    +J_{1-\alpha}(kr)J_{-\alpha}(kr))
    k^2(1-k^2)^\gamma dk\\
    & =
    2c_\gamma\int_0^1
    (J_{2\alpha-1}(2kr)+J_{1-2\alpha}(2kr))k(1-k^2)^{\gamma+1/2} dk
  \end{align*}
  with $c_\gamma$ from \eqref{eq:cgamma}. Now we apply Lemma
  \ref{decomplemma} with $\beta=\gamma+\frac32$ and
  $\nu=\pm(2\alpha-1)$. Using the antisymmetry of $g_{\beta,\nu}$ we
  find that
  \begin{align*}
    %  \sigma'_{\alpha,\gamma}(r)=
    \sigma'_{\alpha,\gamma}(r)
    = 2c_\gamma 
    \left( f_{\gamma+3/2,2\alpha-1}(r)+f_{\gamma+3/2,1-2\alpha}(r)\right),
  \end{align*}
  and the assertion follows after elementary manipulations from the
  asymptotics of $f_{\beta,\nu}$ given in Lemma \ref{decomplemma}.
\end{proof}

%%%%%%%%%%%%%%%%%%%%%%%%%%%%%%%%%%%%%%%%%%%%%%%%%%%%%%%%%%

\subsection{The Laplace transform of the spectral density}\label{sec:exp}

In this subsection we are interested in the quantity
\begin{equation*}
  \sigma_{\alpha,\infty}(r):=\int_0^\infty
  e^{-\lambda}\rho_\alpha(\sqrt{\lambda}r) d\lambda,
  \qquad r\geq 0.
\end{equation*}
Note that this is essentially the Laplace transform of the function $\rho_\alpha(\sqrt{\cdot})$.

\iffalse
>From the properties of $\rho_\alpha$ we immediately get
$\sigma_{\alpha,\infty}(0)=0$ and $\sigma_{\alpha,\infty}(r)\to 1$ as
$r\to\infty$. For $r>0$ we have, by \cite[13.4.1(10)]{luke},  
\begin{multline}\label{eq20}
  \sigma_{\alpha,\infty}(r)
= \frac{r^{2\alpha}}{2^{2\alpha}\Gamma(\alpha+1)}{}_2F_2
\left(
  \left. \begin{array}{c}
    \alpha,\alpha+1/2 \\ \alpha+1, 2\alpha
  \end{array}\right|
-r^2\right) \\
+\frac{r^{2-2\alpha}}{2^{2-2\alpha}\Gamma(2-\alpha)}{}_2F_2
\left(
  \left. \begin{array}{c}
    1-\alpha,3/2-\alpha\\2-\alpha, 2-2\alpha
  \end{array}\right|
-r^2\right).
\end{multline}
In the special case $\alpha=\frac12$, we recognize the series
representation \cite[7.1.5]{as} of the error function, 
\begin{align*}
  \sigma_{1/2,\infty}(r)=
&\frac{r}{\Gamma(\frac32)}
{}_2F_2\left(
  \left. \begin{array}{c}
    1/2,1 \\ 3/2,1
  \end{array}\right|
-r^2\right) 
=\frac{r}{\Gamma(\frac32)}{}_1F_1\left(
  \left. \begin{array}{c}
    1/2 \\ 3/2
  \end{array}\right|
-r^2\right)\\
=&\frac{2r}{\sqrt\pi}\sum_{n=0}^\infty 
\frac{(-1)^n r^{2n}}{(2n+1)n!}
=\erf r.
\end{align*}
The error function increases monotonically to 1  as $r\to\infty$, and 
this is actually true for all $0<\alpha<1$. 
\fi

\begin{theorem}\label{expthm}
  For all $0<\alpha<1$ the function $\sigma_{\alpha,\infty}$ is strictly increasing from $0$ to $1$ on $[0,\infty)$. In particular, $\sigma_{\alpha,\infty}(r)< 1$ for all $r\ge 0$.
\end{theorem}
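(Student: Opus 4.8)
The plan is to work directly from the representation $\sigma_{\alpha,\infty}(r)=\int_0^\infty e^{-\lambda}\rho_\alpha(\sqrt\lambda\,r)\,d\lambda$ and differentiate in $r$. Using Lemma \ref{rhoderivative}, a substitution $\lambda=k^2$ gives
\begin{equation*}
  \sigma_{\alpha,\infty}'(r)
  = 2\int_0^\infty e^{-k^2} k^2 \rho_\alpha'(kr)\,dk
  = 2\int_0^\infty e^{-k^2} k^2\bigl(J_\alpha(kr)J_{\alpha-1}(kr)+J_{1-\alpha}(kr)J_{-\alpha}(kr)\bigr)\,dk.
\end{equation*}
So the theorem amounts to showing that this integral is strictly positive for every $r>0$ and every $0<\alpha<1$ (strict monotonicity), together with the boundary behaviour $\sigma_{\alpha,\infty}(0)=0$ and $\sigma_{\alpha,\infty}(r)\to 1$, both of which follow immediately from $\rho_\alpha(0)=0$, $\rho_\alpha\to 1$ and dominated convergence applied to $\int_0^\infty e^{-\lambda}\rho_\alpha(\sqrt\lambda\,r)\,d\lambda$. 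The "in particular" clause is then automatic.

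The main work is the positivity of $\sigma_{\alpha,\infty}'(r)$. Here I would exploit the same Bessel-product-to-single-Bessel trick as in Lemma \ref{inteq}, but now adapted to the Gaussian weight rather than the compactly supported weight $(1-k^2)^\gamma$. Concretely, the analogue of Lemma \ref{inteq} should give
\begin{equation*}
  \int_0^\infty e^{-k^2}k^2 \bigl(J_\alpha(kr)J_{\alpha-1}(kr)+J_{1-\alpha}(kr)J_{-\alpha}(kr)\bigr)\,dk
  = \mathrm{const}\cdot \int_0^\infty e^{-k^2} k\,\bigl(J_{2\alpha-1}(2kr)+J_{1-2\alpha}(2kr)\bigr)\,dk,
\end{equation*}
with a positive constant, the identity being proved termwise via the power-series expansions of the Bessel functions, the Gaussian moments $\int_0^\infty e^{-k^2}k^{2m+1}\,dk=\tfrac12 m!$, and the duplication formula, exactly as in the proof of Lemma \ref{inteq}. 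The integral $\int_0^\infty e^{-k^2}k\,J_\mu(2kr)\,dk$ with $\mu=\pm(2\alpha-1)\in(-1,1)$ is a classical Weber–Schafheitlin-type integral and evaluates to a confluent hypergeometric expression; summing the $\mu$ and $-\mu$ contributions (and using $\cos\tfrac{\mu\pi}{2}>0$ for $|\mu|<1$) should reveal a manifestly positive quantity — essentially a $\,{}_1F_1$ with all-positive coefficients, or, after simplification, the integrand $e^{-k^2}\sinh$-type expression can be shown positive pointwise.

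The step I expect to be the main obstacle is establishing positivity \emph{uniformly in $r$} rather than just asymptotically: the oscillatory tails of $J_\alpha(kr)J_{\alpha-1}(kr)$ make the sign of the original integrand far from obvious, so the whole argument hinges on first collapsing the product of two Bessel functions into the single Bessel function $J_{2\alpha-1}(2kr)+J_{1-2\alpha}(2kr)$, whose Laplace/Gaussian transform has a clean closed form. If the closed form turns out not to be visibly positive, the fallback is to differentiate once more, or to use the integral representation $J_\mu(z)+J_{-\mu}(z)=\tfrac{2}{\pi}\int_0^{\pi/2}\cos(z\sin t)\cos(\mu t)\,dt\cdot(\dots)$ to write $\sigma_{\alpha,\infty}'(r)$ as an integral of $e^{-k^2}k\cos(\text{something})$ against a positive kernel in an auxiliary variable, and then do the $k$-integration explicitly to land on a positive Gaussian-times-polynomial expression. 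Alternatively, one can simply invoke the explicit $\,{}_2F_2$ representation in \eqref{eq20} (in the commented-out block) and check term signs, though the clean Bessel-collapse route is preferable for a self-contained argument.
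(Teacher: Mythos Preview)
Your overall strategy coincides with the paper's: differentiate, use Lemma~\ref{rhoderivative} to write $\sigma_{\alpha,\infty}'(r)$ as a Gaussian integral of $J_\alpha J_{\alpha-1}+J_{1-\alpha}J_{-\alpha}$, express the result via confluent hypergeometric functions, and then argue positivity. Your intermediate step of first collapsing the Bessel product to a single Bessel function (a Gaussian analogue of Lemma~\ref{inteq}) is a harmless detour: the identity you state does hold, with constant $1/\sqrt\pi$, and after evaluating $\int_0^\infty e^{-k^2}k\,J_{\pm(2\alpha-1)}(2kr)\,dk$ you land on precisely the two ${}_1F_1$ terms the paper obtains directly from Luke's tables. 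So up to that point the two arguments are equivalent.

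The gap is in the positivity step, which you correctly flag as the main obstacle but then resolve only speculatively. Your hope for ``a ${}_1F_1$ with all-positive coefficients'' does not pan out: after Kummer's transformation the two summands are $e^{-r^2}{}_1F_1(\alpha-\tfrac12;2\alpha;r^2)$ and $e^{-r^2}{}_1F_1(\tfrac12-\alpha;2-2\alpha;r^2)$, and for $\alpha\neq\tfrac12$ one of the top parameters is negative, so neither term is sign-definite on its own. The suggested integral representation for $J_\mu+J_{-\mu}$ is not quite right either (the Poisson representation requires $\mu>-\tfrac12$, which fails for one of the two orders when $|2\alpha-1|>\tfrac12$). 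What the paper does---and this is the one non-obvious idea---is first reduce to $0<\alpha\le\tfrac12$ via $\rho_\alpha=\rho_{1-\alpha}$, then recognize the specific linear combination of the two ${}_1F_1$'s as the Tricomi function $U(\tfrac12-\alpha,\,2-2\alpha,\,r^2)$; since $\tfrac12-\alpha>0$ in the reduced range, the standard integral representation $U(a,b,z)=\Gamma(a)^{-1}\int_0^\infty e^{-zt}t^{a-1}(1+t)^{b-a-1}\,dt$ applies and gives strict positivity immediately. Once you see this, the proof is one line; without it, neither of your fallbacks quite closes the argument.
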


This theorem shows that the oscillations we observed for $\sigma_{\alpha,\gamma}$ are no longer present. 

\begin{proof}
	Since $\rho_\alpha=\rho_{1-\alpha}$ and hence $\sigma_{\alpha,\infty}=\sigma_{1-\alpha,\infty}$, it suffices to treat the case $0<\alpha\leq 1/2$. By the properties of $\rho_\alpha$ (see Lemma \ref{rhoderivative}) and dominated convergence we get $\sigma_{\alpha,\infty}(0)=0$ and $\sigma_{\alpha,\infty}(r)\to 1$ as $r\to\infty$. Again by Lemma \ref{rhoderivative},
\begin{align*}
  \sigma_{\alpha,\infty}'(r)
  & = 2 \int_0^\infty k^2 e^{-k^2}\rho_\alpha'(kr)dk \\
	& = 2 \int_0^\infty k^2 e^{-k^2}\left(J_\alpha(kr)J_{\alpha-1}(kr)+J_{1-\alpha}(kr)J_{-\alpha}(kr)\right) dk,
\end{align*}
and hence according to \cite[13.4.1(10)]{luke},
\begin{multline*}
     \sigma'_{\alpha,\infty}(r)
= r^{1-2\alpha}  \left[ 
\frac{2^{1-2\alpha}}{r^{2(1-2\alpha)}\Gamma(\alpha)}
{}_1F_1\left(
  \left. \begin{array}{c}
    \alpha+\frac12 \\ 2\alpha
  \end{array}\right|
 -r^2\right) \right. \\ 
   \left.
+\frac{1}{2^{1-2\alpha}\Gamma(1-\alpha)}
{}_1F_1\left(
  \left. \begin{array}{c}
    \frac32-\alpha\\2-2\alpha
  \end{array}\right|
 -r^2 \right) \right].
\end{multline*}
In the special case $\alpha=1/2$ we note that
\begin{equation*}
\sigma'_{1/2,\infty}(r)
= \frac2{\sqrt\pi} {}_1F_1\left(
  \left. \begin{array}{c}
    1 \\ 1
  \end{array}\right|
 -r^2\right)
= \frac2{\sqrt\pi} e^{-r^2}.
\end{equation*}
If $0<\alpha<\frac12$ we can apply the \emph{Kummer transformations}
\cite[13.1.27]{as} (note that  ${}_1F_1(a,b;z)=M(a,b,z)$ in \cite{as})
to get 
\begin{multline*}
     \sigma'_{\alpha,\infty}(r)
= r^{1-2\alpha}e^{-r^2}  \left[
\frac{2^{1-2\alpha} }{r^{2(1-2\alpha)}\Gamma(\alpha)}
{}_1F_1\left(
  \left. \begin{array}{c}
    \alpha-\frac12 \\ 2\alpha
  \end{array}\right|
 r^2\right) \right. \\ 
 \left.
+\frac1{2^{1-2\alpha}\Gamma(1-\alpha)}
{}_1F_1\left(
  \left. \begin{array}{c}
    \frac12-\alpha\\2-2\alpha
  \end{array}\right|
 r^2 \right) \right].
\end{multline*}
By elementary properties of the gamma function this can be rewritten as
\begin{equation*}
	\sigma'_{\alpha,\infty}(r)
%	= r^{2\alpha-1} e^{-r^2} \frac2{\sqrt\pi}\sin\alpha\pi U(-1/2+\alpha,2\alpha,r^2)
	=  \frac{2\sin\alpha\pi}{\sqrt\pi} r^{1-2\alpha} e^{-r^2}
        U\left(\frac12-\alpha,2-2\alpha,r^2\right) 
\end{equation*}
where \cite[13.1.3]{as}
\begin{multline*}
	U\left(\frac12-\alpha,2-2\alpha,r^2\right)
	=  \frac{\pi}{\sin2\alpha\pi} \frac1{r^{2(1-2\alpha)}}
	\left[
	\frac1{\Gamma(\frac12-\alpha)\Gamma(2\alpha)}
	{}_1F_1\left(
  \left. \begin{array}{c}
    \alpha-\frac12 \\ 2\alpha
  \end{array}\right|
	 r^2\right)
	\right. \\ 
	 \left.	
	- \frac{r^{2(1-2\alpha)}}{\Gamma(\alpha-\frac12)\Gamma(2-2\alpha)}
	{}_1F_1\left(
  \left. \begin{array}{c}
    \frac12-\alpha \\ 2-2\alpha
  \end{array}\right|
	 r^2\right)
	\right].
\end{multline*}
$U$ is positive by the integral representation  \cite[13.2.5]{as}, and
this proves the theorem. 
\end{proof}

\section{Counterexample to the generalized diamagnetic inequality}

Following \cite{ELV} we consider the question: \emph{Which
  non-negative convex functions $\phi$ vanishing at infinity satisfy }
\begin{equation}\label{eq:diamag}
	\tr\chi_\Omega\phi(H_\alpha)\chi_\Omega \leq \tr\chi_\Omega\phi(-\Delta)\chi_\Omega
	\quad\mbox{\emph{for all bounded domains  $\Omega\subset\R^2$?}}
\end{equation}
By \eqref{eq:diagonal} the statement \eqref{eq:diamag} is
equivalent to the pointwise inequality 
\begin{equation}\label{eq:diamagpointwise}
	\int_0^\infty \phi(\lambda) \rho_\alpha (\sqrt\lambda r) \, d\lambda 
	\leq \int_0^\infty \phi(\lambda) \, d\lambda,
	\quad\mbox{for all $r\in[0,\infty)$.}
\end{equation}
Note that \eqref{eq:diamag} is true for the family of functions
$\phi(\lambda)=e^{-t\lambda}$, $t>0$; we shall prove this (even with
strict inequality) in Remark~\ref{nyanm} below. Alternatively,
it follows from the  `ordinary' diamagnetic inequality (see, e.g.,
\cite{hus}), 
\begin{equation*}
  |\exp( -tH_\alpha)u| \leq \exp(-t(-\Delta))|u| \ \text{a.e.},
\qquad u\in L_2(\R^2),
\end{equation*}
and \cite[Thm.~2.13]{simo}, since
\begin{equation*}
  \left\| \chi_\Omega\exp\left(- \frac t2  H_\alpha\right) \right\|_2^2 
=\tr \chi_\Omega \exp(-t H_\alpha) \chi_\Omega.
\end{equation*}
We point out that the validity of \eqref{eq:diamag} for
$\phi(\lambda)=e^{-t\lambda}$, $t>0$,  actually implies that it holds
for any function of the form
\begin{equation}\label{eq:laplace}
	\phi(\lambda)=\int_0^\infty e^{-t\lambda} w(t)\,dt,
	\qquad w\geq 0.
\end{equation}

In connection with a Berezin-Li-Yau-type inequality one is particularly interested in the functions 
\begin{equation}\label{eq:momentphi}
	\phi(\lambda)=(\lambda-\Lambda)_-^\gamma,
	\qquad \gamma\geq 1,\ \Lambda>0.
\end{equation}
Note that these functions cannot be expressed in the form \eqref{eq:laplace}.

\begin{theorem}\label{diamagthm}
	Let $0<\alpha<1$ and let $\phi$ be given by
        $\eqref{eq:momentphi}$ for some $\gamma\geq 1$,
        $\Lambda>0$. Then the generalized diamagnetic inequality
        \eqref{eq:diamag} is violated. More precisely, there exist
        constants $C_1, C_2>0$ (depending on $\alpha$ and $\gamma$,
        but not on $\Lambda$) such that for all $|x|\geq
        C_1\Lambda^{-1/2}$, 
	\begin{equation}\label{eq:diamagthm}
		\left| \phi(H_\alpha)(x,x) - \phi(-\Delta)(x,x) 
		+ A_{\alpha,\gamma}(\Lambda)
	\frac{\sin(2\sqrt\Lambda|x|-\frac12\gamma \pi)}{|x|^{2+\gamma}} \right| 
	\leq C_2 \frac{\Lambda^{(\gamma-1)/2}}{|x|^{3+\gamma}}
  \end{equation}
  with $A_{\alpha,\gamma}(\Lambda):=(2\pi)^{-2} \Lambda^{\gamma/2}\Gamma(\gamma+1)\sin\alpha\pi$.
\end{theorem}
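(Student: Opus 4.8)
The plan is to compute both diagonal kernels $\phi(H_\alpha)(x,x)$ and $\phi(-\Delta)(x,x)$ from the formula \eqref{eq:diagonal}, subtract, and isolate the leading oscillatory term using the asymptotic expansion of the relevant spectral integral already developed in Section~\ref{sec:mom}. First I would observe that by \eqref{eq:diagonal} and \eqref{eq:momentphi},
\begin{equation*}
	\phi(H_\alpha)(x,x) = \frac{1}{4\pi}\int_0^\infty (\lambda-\Lambda)_-^\gamma \rho_\alpha(\sqrt\lambda|x|)\,d\lambda
	= \frac{\Lambda^{\gamma+1}}{4\pi}\int_0^1 (1-\mu)^\gamma \rho_\alpha(\sqrt{\Lambda\mu}\,|x|)\,d\mu,
\end{equation*}
after the substitution $\lambda = \Lambda\mu$; the inner integral is exactly $\sigma_{\alpha,\gamma}(\sqrt\Lambda|x|)$ in the notation of \eqref{eq:sigma}. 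The same computation with $\rho_0\equiv 1$ gives $\phi(-\Delta)(x,x) = \frac{\Lambda^{\gamma+1}}{4\pi(\gamma+1)}$, which is precisely $\frac{\Lambda^{\gamma+1}}{4\pi}$ times the limiting value $\frac{1}{\gamma+1}$ of $\sigma_{\alpha,\gamma}$. Hence
\begin{equation*}
	\phi(H_\alpha)(x,x) - \phi(-\Delta)(x,x) = \frac{\Lambda^{\gamma+1}}{4\pi}\left(\sigma_{\alpha,\gamma}(\sqrt\Lambda|x|) - \frac{1}{\gamma+1}\right).
\end{equation*}

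Next I would invoke Theorem~\ref{sigmathm} directly with $r = \sqrt\Lambda|x|$: for $r$ large (which is guaranteed once $|x|\geq C_1\Lambda^{-1/2}$ with $C_1$ chosen large enough to place $r$ in the regime where the asymptotics hold),
\begin{equation*}
	\sigma_{\alpha,\gamma}(r) - \frac{1}{\gamma+1} = -\Gamma(\gamma+1)\frac{\sin\alpha\pi}{\pi}\frac{\sin(2r-\tfrac12\gamma\pi)}{r^{2+\gamma}} + \mathcal O(r^{-3-\gamma}).
\end{equation*}
Substituting $r=\sqrt\Lambda|x|$ turns $r^{-(2+\gamma)}$ into $\Lambda^{-(2+\gamma)/2}|x|^{-(2+\gamma)}$ and $r^{-(3+\gamma)}$ into $\Lambda^{-(3+\gamma)/2}|x|^{-(3+\gamma)}$, while the prefactor $\frac{\Lambda^{\gamma+1}}{4\pi}$ multiplies through. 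The oscillatory term becomes
\begin{equation*}
	-\frac{\Lambda^{\gamma+1}}{4\pi}\Gamma(\gamma+1)\frac{\sin\alpha\pi}{\pi}\cdot\frac{\sin(2\sqrt\Lambda|x|-\tfrac12\gamma\pi)}{\Lambda^{(2+\gamma)/2}|x|^{2+\gamma}}
	= -\frac{\Lambda^{\gamma/2}}{(2\pi)^2}\Gamma(\gamma+1)\sin\alpha\pi\cdot\frac{\sin(2\sqrt\Lambda|x|-\tfrac12\gamma\pi)}{|x|^{2+\gamma}},
\end{equation*}
which is exactly $-A_{\alpha,\gamma}(\Lambda)\sin(2\sqrt\Lambda|x|-\tfrac12\gamma\pi)/|x|^{2+\gamma}$, and the error term becomes $\mathcal O\!\big(\frac{\Lambda^{\gamma+1}}{4\pi}\Lambda^{-(3+\gamma)/2}|x|^{-(3+\gamma)}\big) = \mathcal O(\Lambda^{(\gamma-1)/2}|x|^{-(3+\gamma)})$. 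Rearranging gives precisely \eqref{eq:diamagthm} with a constant $C_2$ coming from the implicit constant in Theorem~\ref{sigmathm}.

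The one point requiring care — and the main (though modest) obstacle — is the uniformity of the error term in $\Lambda$. Theorem~\ref{sigmathm} is stated as an asymptotic expansion in $r$ with an implicit constant that a priori could depend on nothing but $\alpha,\gamma$; since the scaling $r=\sqrt\Lambda|x|$ packages all $\Lambda$-dependence into the single variable $r$, the estimate $|\sigma_{\alpha,\gamma}(r)-\tfrac{1}{\gamma+1}+\Gamma(\gamma+1)\tfrac{\sin\alpha\pi}{\pi}\tfrac{\sin(2r-\gamma\pi/2)}{r^{2+\gamma}}|\leq C r^{-3-\gamma}$ for $r\geq C_1$ holds with $C,C_1$ depending only on $\alpha,\gamma$, and this is all that is needed — no separate $\Lambda$-uniform argument is required. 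To make this fully rigorous I would note that Theorem~\ref{sigmathm} is derived by integrating \eqref{eq:sigmaderivative} from $r$ to $\infty$, and the expansion in Proposition~\ref{sigmaderivative} (itself obtained from Lemma~\ref{decomplemma}) carries an error $\mathcal O(r^{-2})$ relative to the main $r^{-\gamma-2}$ scale with constant depending only on $\alpha,\gamma$; integrating preserves this. Finally I would fix $C_1$ so that for $|x|\geq C_1\Lambda^{-1/2}$ one has $r=\sqrt\Lambda|x|\geq C_1$, the threshold beyond which all the cited asymptotics are valid, and record $C_2$ explicitly in terms of the constant in Proposition~\ref{sigmaderivative}. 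This completes the proof.
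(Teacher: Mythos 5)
Your derivation of the quantitative estimate \eqref{eq:diamagthm} is correct and follows exactly the paper's route: rewriting the diagonal kernel via \eqref{eq:diagonal} as $\frac{\Lambda^{\gamma+1}}{4\pi}\sigma_{\alpha,\gamma}(\sqrt\Lambda|x|)$, noting $\phi(-\Delta)(x,x)=\frac{\Lambda^{\gamma+1}}{4\pi(\gamma+1)}$, and then substituting the asymptotics of Theorem~\ref{sigmathm} at $r=\sqrt\Lambda|x|$. Your remarks on $\Lambda$-uniformity are accurate and well placed: the scaling packages the entire parameter dependence into the single variable $r$, so the constants produced by Proposition~\ref{sigmaderivative} and Theorem~\ref{sigmathm} depend only on $\alpha$ and $\gamma$, as required.

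The one genuine gap is that you never return to the first assertion of the theorem, namely that the generalized diamagnetic inequality \eqref{eq:diamag} is actually violated. Estimate \eqref{eq:diamagthm} is a pointwise statement about the diagonal of the kernel, whereas \eqref{eq:diamag} is a trace inequality over bounded domains $\Omega$; one more step is needed to pass from one to the other. In the paper this is done by considering the annuli $\Omega_n:=\{x\in\R^2:|\sqrt\Lambda|x|-r_n|<\epsilon\}$ with $r_n:=\pi(n+\tfrac14(\gamma-1))$ and fixed small $\epsilon$: on $\Omega_n$ the sine in \eqref{eq:diamagthm} is bounded below by a positive constant, so the oscillatory term dominates the $\mathcal O(|x|^{-3-\gamma})$ error for $n$ large, and integrating the pointwise estimate over $\Omega_n$ produces $\tr\chi_{\Omega_n}\phi(H_\alpha)\chi_{\Omega_n}>\tr\chi_{\Omega_n}\phi(-\Delta)\chi_{\Omega_n}$. (Equivalently, one can invoke the equivalence \eqref{eq:diamag} $\Leftrightarrow$ \eqref{eq:diamagpointwise} recorded just before the theorem and observe that \eqref{eq:diamagpointwise} fails at the values $r=r_n$.) Adding a sentence of this kind would close the argument.
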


\begin{proof}
  By \eqref{eq:diagonal} and the scaling $\lambda\mapsto\Lambda\lambda$,
	\begin{equation}\label{eq:diamagthm1}
		\phi(H_\alpha)(x,x) 
		= \frac{\Lambda^{\gamma+1}}{4\pi} 
		\int_0^1 (1-\lambda)^\gamma \rho_\alpha (\sqrt{\Lambda\lambda}|x|) \, d\lambda
		= \frac{\Lambda^{\gamma+1}}{4\pi}\sigma_{\alpha,\gamma}(\sqrt\Lambda|x|),
	\end{equation}
	where we used the notation \eqref{eq:sigma}. Note that
        $\phi(-\Delta)(x,x)=(4\pi(\gamma+1))^{-1}\Lambda^{\gamma+1}$. The
        expansion \eqref{eq:diamagthm} is thus a consequence of
        Theorem \ref{sigmathm}. To prove that the generalized
        diamagnetic inequality \eqref{eq:diamag} is violated, one
        can consider the domains $\Omega_n := \{x\in\R^2 : \
        |\sqrt\Lambda |x| - r_n|<\epsilon \}$, $n\in\N$, with
        $r_n:=\pi (n + \frac14(\gamma-1))$ and sufficiently small but fixed
        $\epsilon>0$. It follows easily from \eqref{eq:diamagthm} that
        \eqref{eq:diamag} is violated for all large $n$. 
\end{proof}

\begin{remark}\label{nyanm}
The analogous statement (with the same proof) is valid for
$-1<\gamma<1$. Moreover, for exponential functions
Theorem~\ref{expthm} implies that
\begin{equation*}
  \exp(-tH_\alpha)(x,x)
=\frac1{4\pi t}\int_0^\infty
\rho_\alpha\left(\sqrt\lambda\frac{|x|}{\sqrt
    t}\right)e^{-\lambda}d\lambda
=\frac1{4\pi t} \sigma_{\alpha,\infty}\left(\frac{|x|}{\sqrt t}\right)
\end{equation*}
is \emph{strictly} less than $(4\pi t)^{-1}=\exp(-t(-\Delta))(x,x)$. 
\end{remark}

The following substitute for \eqref{eq:diamag} will be useful later on.

\begin{proposition}\label{weakdiamag}
	Let $0<\alpha<1$ and let $\phi$ be given by $\eqref{eq:momentphi}$ for some $\gamma>-1$, $\Lambda>0$. Then for all open sets $\Omega\subset\R^2$
	\begin{equation}\label{eq:weakdiamag}
		\tr\chi_\Omega\phi(H_\alpha)\chi_\Omega \leq
                R_{\gamma}(\alpha)
                \tr\chi_\Omega\phi(-\Delta)\chi_\Omega 
	\end{equation}
	with
	\begin{equation}\label{eq:r}
		R_{\gamma}(\alpha):= 
		(\gamma+1) \sup_{r\geq 0} \int_0^1 (1-\lambda)^\gamma \rho_\alpha (\sqrt\lambda r) \, d\lambda. 
	\end{equation}
\end{proposition}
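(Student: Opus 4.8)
The plan is to reduce the trace inequality \eqref{eq:weakdiamag} to a pointwise statement about the integral kernels on the diagonal, exactly as was done in the passage from \eqref{eq:diamag} to \eqref{eq:diamagpointwise}. Concretely, for $\phi$ given by \eqref{eq:momentphi} with $\gamma>-1$ and $\Lambda>0$, the operators $\phi(H_\alpha)$ and $\phi(-\Delta)$ are bounded, and since $\chi_\Omega\phi(H_\alpha)\chi_\Omega$ is a non-negative operator with continuous integral kernel, one has
\begin{equation*}
	\tr\chi_\Omega\phi(H_\alpha)\chi_\Omega = \int_\Omega \phi(H_\alpha)(x,x)\,dx,
\end{equation*}
and similarly with $H_\alpha$ replaced by $-\Delta$. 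So the proposed inequality will follow if I can show the pointwise bound $\phi(H_\alpha)(x,x)\le R_\gamma(\alpha)\,\phi(-\Delta)(x,x)$ for all $x\in\R^2$.

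The second step is to evaluate both diagonal kernels. By \eqref{eq:diagonal} together with the scaling $\lambda\mapsto\Lambda\lambda$ (as in \eqref{eq:diamagthm1} in the proof of Theorem \ref{diamagthm}),
\begin{equation*}
	\phi(H_\alpha)(x,x) = \frac{\Lambda^{\gamma+1}}{4\pi}\,\sigma_{\alpha,\gamma}(\sqrt\Lambda\,|x|),
\end{equation*}
using the notation \eqref{eq:sigma}, while $\phi(-\Delta)(x,x)=(4\pi(\gamma+1))^{-1}\Lambda^{\gamma+1}$ (this also follows from \eqref{eq:diagonal} since $\rho_0\equiv 1$). Dividing, the claimed pointwise inequality becomes
\begin{equation*}
	(\gamma+1)\,\sigma_{\alpha,\gamma}(\sqrt\Lambda\,|x|) \le R_\gamma(\alpha),
\end{equation*}
which holds for every $x$ and every $\Lambda>0$ precisely by the definition \eqref{eq:r} of $R_\gamma(\alpha)$ as the supremum over $r\ge 0$ of $(\gamma+1)\sigma_{\alpha,\gamma}(r)$. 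This closes the argument. One should remark in passing that $R_\gamma(\alpha)$ is indeed finite: $\rho_\alpha$ is bounded on $\R_+$ (it is continuous by Lemma \ref{rhoderivative} and tends to $1$ at infinity, hence bounded), so the integral in \eqref{eq:r} is uniformly bounded in $r$.

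The only genuine point requiring care — and the one I expect to be the main obstacle — is the justification of the trace identity $\tr\chi_\Omega\phi(H_\alpha)\chi_\Omega=\int_\Omega\phi(H_\alpha)(x,x)\,dx$ for arbitrary open $\Omega$, since $\Omega$ need not be bounded and the operator $\chi_\Omega\phi(H_\alpha)\chi_\Omega$ need not be trace class; in that case both sides are $+\infty$ and the inequality is trivially true, so one may reduce to the case where the left side is finite, or alternatively exhaust $\Omega$ by bounded open subsets and pass to the limit using monotone convergence, invoking the standard fact (as in \cite[Thm.~2.13]{simo}, already cited in the excerpt) that for a non-negative operator with continuous kernel the trace over a region equals the integral of the diagonal. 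Once this measure-theoretic bookkeeping is in place, the rest is immediate from \eqref{eq:diagonal}, the scaling, and the definition \eqref{eq:r}.
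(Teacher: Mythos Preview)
Your proof is correct and follows essentially the same route as the paper: the paper's entire argument is that the proposition ``is an immediate consequence of \eqref{eq:diamagthm1}'', i.e., exactly the reduction to the pointwise bound $(\gamma+1)\sigma_{\alpha,\gamma}(\sqrt\Lambda|x|)\le R_\gamma(\alpha)$ that you carry out. Your additional remarks on the trace identity for unbounded $\Omega$ and on the finiteness of $R_\gamma(\alpha)$ are reasonable bookkeeping that the paper leaves implicit.
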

	
Indeed, this is an immediate consequence of
\eqref{eq:diamagthm1}. This shows as well that the right-hand side of
\eqref{eq:r} yields the sharp constant in \eqref{eq:weakdiamag}. 

\begin{remark}\label{gammamono}
	The constant $R_{\gamma}(\alpha)$ is strictly decreasing with
        respect to $\gamma$. Indeed, following \cite{AL} we write, for
        $\gamma>\gamma'>-1$ and $0\leq\lambda\leq 1$, 
	\begin{equation*}
		B(\gamma-\gamma',\gamma'+1) (1-\lambda)^\gamma =
		\int_0^{1-\lambda}
                (1-\lambda-\mu)^{\gamma'}\mu^{\gamma-\gamma'-1}\,d\mu
	\end{equation*}
	and find, for any $r\geq 0$,
	\begin{align*}
		& B(\gamma-\gamma',\gamma'+1) \int_0^1
                (1-\lambda)^\gamma \rho_\alpha (\sqrt\lambda r) \,
                d\lambda \\ 
		=&   
		\int_0^1 \left( \int_0^{1-\mu}
                  (1-\lambda-\mu)^{\gamma'} \rho_\alpha (\sqrt\lambda
                  r)\, d\lambda \right) \mu^{\gamma-\gamma'-1} \,d\mu \\
		=&  
		\int_0^1 w(r\sqrt{1-\mu}) (1-\mu)^{\gamma'+1}
                \mu^{\gamma-\gamma'-1} \,d\mu, 
	\end{align*}
	where $w(s) := \int_0^1 (1-\lambda)^{\gamma'} \rho_\alpha
        (\sqrt\lambda s)\, d\lambda$. Since $w(s) \leq
        R_{\gamma'}(\alpha)(\gamma'+1)^{-1}$ one concludes that 
	\begin{align*}
		R_{\gamma}(\alpha)
		& \leq R_{\gamma'}(\alpha) (\gamma'+1)^{-1} (\gamma+1)
                B(\gamma-\gamma',\gamma'+1)^{-1}  
		\int_0^1 (1-\mu)^{\gamma'+1} \mu^{\gamma-\gamma'-1} \,d\mu \\
		& = R_{\gamma'}(\alpha) (\gamma'+1)^{-1} (\gamma+1)
                B(\gamma-\gamma',\gamma'+1)^{-1}  
		B(\gamma-\gamma',\gamma'+2) = R_{\gamma'}(\alpha).
	\end{align*}		
	That this inequality is actually strict, follows from the fact
        that the supremum in \eqref{eq:r} is attained for some
        $r_0\in(0,\infty)$ (see Theorem \ref{sigmathm}) and that
        $w(r_0\sqrt{1-\mu})$ is non-constant with respect to $\mu$.  
\end{remark}	

We close this section by giving numerical values for
$R_{\gamma}(\alpha)$. Note that 
\begin{align*}
  R_{\gamma}(\alpha)
	=& 2(\gamma+1)\sup_{r\ge 0}\int_0^1(1-k^2)^\gamma \rho_\alpha(k r)k\,dk \\
	=& \sup_{r\ge 0} r\int_0^1(1-k^2)^{\gamma+1}\rho_\alpha'(kr) \, dk \\
	=& \sup_{r\ge 0} r\int_0^1(1-k^2)^{\gamma+1}\left( J_{\alpha}(kr)
  J_{\alpha-1}(kr) +J_{1-\alpha}(kr)J_{-\alpha}(kr)\right)\, dk.
\end{align*}
The integral can be evaluated
%for given $r$ 
by some quadrature algorithm and by Theorem~\ref{sigmathm} the
supremum is attained for finite $r$. This allows us to compute
approximate values of $R_\gamma(\alpha)$ $(=R_\gamma(1-\alpha))$, 
\begin{center}
  \begin{tabular}{l|lllll}
    & $R_\gamma(0.1)$ & $R_\gamma(0.2)$ & $R_\gamma(0.3)$ &
    $R_\gamma(0.4)$ & $R_\gamma(0.5)$ \\ 
    \hline
    $\gamma=0$ & 1.01682 & 1.03262 & 1.04422 & 1.05151 & 1.05397 \\
    $\gamma=\frac12$ & 1.01027 & 1.02050 & 1.02781 & 1.03241 & 1.03395 \\
    $\gamma=1$ & 1.00650 & 1.01351 & 1.01833 & 1.02138 & 1.02238 \\
    $\gamma=\frac32$ & 1.00417 & 1.00920 & 1.01250 & 1.01457 & 1.01524 \\
    $\gamma=2$ & 1.00267 & 1.00642 & 1.00874 & 1.01019 & 1.01065
  \end{tabular} 
\end{center}
In addition to the monotonicity in $\gamma$ (see Remark
\ref{gammamono}) the constant $R_{\gamma}(\alpha)$ seems to be strictly
increasing in $\alpha$, but we have not been able to
prove this.

%%%%%%%%%%%%%%%%%%%%%%%%%%%%%%%%%%%%%%%%%%%%%%%%%%%%%%%%%%%%%%%
%%%%%%%%%%%%%%%%%%%%%%%%%%%%%%%%%%%%%%%%%%%%%%%%%%%%%%%%%%%%%%%

\section{A magnetic Berezin-Li-Yau inequality}\label{sec:bly}

As in the introduction, let $\Omega\subset\R^2$ be a bounded domain and define
$H_\alpha^\Omega$ in $L_2(\Omega)$ through the closure of the 
quadratic form $\|(\D-\alpha \A_0)u\|^2$ on 
$C_0^\infty(\Omega\setminus\{0\})$. For eigenvalue moments of this
operator we shall prove 

\begin{theorem}\label{blythm}
	Let $0<\alpha<1$, $\gamma\geq1$ and $\Omega\subset\R^2$ be a
        bounded domain such that the operator $H_\alpha^\Omega$ has
        discrete spectrum. Then for any $\Lambda>0$, 
	\begin{align}\label{eq:blythm}
		\tr(H_\alpha^\Omega-\Lambda)_-^\gamma
		\leq R_{\gamma}(\alpha) \frac1{4\pi(\gamma+1)}
                |\Omega| \Lambda^{\gamma+1}  
	\end{align}
	with the constant $R_{\gamma}(\alpha)$ from \eqref{eq:r}.
\end{theorem}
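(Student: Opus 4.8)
The plan is to derive \eqref{eq:blythm} from the generalized diamagnetic bound of Proposition~\ref{weakdiamag} by a standard application of the Berezin--Lieb (coherent state) argument, exactly as announced in the introduction where \eqref{eq:blyab} is said to follow from \eqref{eq:gendiamag} ``by the Berezin--Lieb inequality.'' Concretely, fix $\Lambda>0$ and set $\phi(\lambda):=(\lambda-\Lambda)_-^\gamma$. Since $\gamma\ge1$, $\phi$ is convex and non-increasing on $[0,\infty)$. Let $\{\psi_j\}$ be an orthonormal basis of $L_2(\Omega)$ consisting of eigenfunctions of $H_\alpha^\Omega$ with eigenvalues $\lambda_j^\Omega$; extend each $\psi_j$ by zero to $\R^2$, so $\psi_j\in\dom((\D-\alpha\A_0)^2$-form in $\R^2)$ with $\|(\D-\alpha\A_0)\psi_j\|_{L_2(\R^2)}^2=\lambda_j^\Omega$. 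By the spectral theorem and Jensen's inequality applied to the spectral measure of $H_\alpha$ in the state $\psi_j$,
\begin{equation*}
	\phi(\lambda_j^\Omega)=\phi\bigl(\langle\psi_j,H_\alpha\psi_j\rangle\bigr)
	\le\langle\psi_j,\phi(H_\alpha)\psi_j\rangle.
\end{equation*}
Summing over $j$ and using $\sum_j\langle\psi_j,\phi(H_\alpha)\psi_j\rangle=\tr\chi_\Omega\phi(H_\alpha)\chi_\Omega$ (because $\{\psi_j\}$ is an orthonormal basis of $\chi_\Omega L_2(\R^2)$ and $\chi_\Omega\phi(H_\alpha)\chi_\Omega$ is trace class, which holds since $\Omega$ is bounded and $\phi$ has compact support), we obtain
\begin{equation*}
	\tr(H_\alpha^\Omega-\Lambda)_-^\gamma=\sum_j\phi(\lambda_j^\Omega)\le\tr\chi_\Omega\phi(H_\alpha)\chi_\Omega.
\end{equation*}

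Next I would invoke Proposition~\ref{weakdiamag}: since $\phi$ is of the form \eqref{eq:momentphi} with $\gamma\ge1>-1$, we have $\tr\chi_\Omega\phi(H_\alpha)\chi_\Omega\le R_\gamma(\alpha)\,\tr\chi_\Omega\phi(-\Delta)\chi_\Omega$. Finally the unmagnetic term is computed explicitly from \eqref{eq:diagonal} with $\alpha=0$, $\rho_0\equiv1$: one has $\phi(-\Delta)(x,x)=(4\pi)^{-1}\int_0^\infty\phi(\lambda)\,d\lambda=(4\pi)^{-1}\int_0^\Lambda(\Lambda-\lambda)^\gamma\,d\lambda=(4\pi(\gamma+1))^{-1}\Lambda^{\gamma+1}$ (this is exactly the constant recorded in the proof of Theorem~\ref{diamagthm}), hence $\tr\chi_\Omega\phi(-\Delta)\chi_\Omega=|\Omega|\,(4\pi(\gamma+1))^{-1}\Lambda^{\gamma+1}$. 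Chaining the three inequalities/identities gives \eqref{eq:blythm}.

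I expect the only genuine subtlety to be the justification of the Berezin--Lieb step, i.e.\ that the form-domain eigenfunctions $\psi_j$, extended by zero, legitimately serve as test vectors for $\phi(H_\alpha)$ and that $\langle\psi_j,H_\alpha\psi_j\rangle$ equals the form value $\|(\D-\alpha\A_0)\psi_j\|^2=\lambda_j^\Omega$. This is where the hypothesis that $H_\alpha^\Omega$ is defined via \emph{closure} of the form on $C_0^\infty(\Omega\setminus\{0\})$ matters: it guarantees that the form domain of $H_\alpha^\Omega$ embeds into that of $H_\alpha$ (both are $H^1_0$-type spaces away from the puncture), so each $\psi_j$ lies in $\dom(H_\alpha^{1/2})$ and $\|H_\alpha^{1/2}\psi_j\|^2=\lambda_j^\Omega$. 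With that in hand, Jensen applied to $\phi$ and the spectral measure $d\langle\psi_j,E_\mu\psi_j\rangle$ (a probability measure since $\|\psi_j\|=1$) is routine, as is the trace-class property. Everything else is bookkeeping, so the write-up can be kept short.
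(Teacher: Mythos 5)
Your proposal is correct and takes essentially the same route as the paper's proof: extend the eigenfunctions by zero, pass to $\tr\chi_\Omega\phi(H_\alpha)\chi_\Omega$ by convexity, then invoke Proposition~\ref{weakdiamag} together with the explicit value $\phi(-\Delta)(x,x)=(4\pi(\gamma+1))^{-1}\Lambda^{\gamma+1}$. The only cosmetic difference is that you apply Jensen's inequality directly to the spectral measure of $H_\alpha$ in the state $\tilde\psi_j$, which merges the paper's two-step reduction (first from $H_\alpha^\Omega$ to $\chi_\Omega H_\alpha\chi_\Omega$ by the variational estimate \eqref{eq:gamma0}, then from $\chi_\Omega H_\alpha\chi_\Omega$ to $H_\alpha$ by the Berezin--Lieb inequality) into a single Jensen step, which is exactly the standard proof of Berezin--Lieb inlined.
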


As explained in the introduction, 
\begin{equation*}
	\frac1{4\pi(\gamma+1)} |\Omega| \Lambda^{\gamma+1} 
	= \frac1{(2\pi)^{2}}\iint_{\R^2\times\Omega} (|\xi|^2-\Lambda)^\gamma_-\,d\xi\,dx
\end{equation*}
is the semi-classical approximation for
$\tr(H_\alpha^\Omega-\Lambda)_-^\gamma$. Unfortunately, we can only
prove \eqref{eq:blythm} with an excess factor $R_{\gamma}(\alpha)$,
which is strictly larger than one by Theorem \ref{sigmathm}. Based on
numerical calculations (see next section) we conjecture that
\eqref{eq:blythm} should be valid with $R_{\gamma}(\alpha)=1$. 

\begin{proof} 
  We first note that
  \begin{equation}\label{eq:gamma0}
    \tr(H_\alpha^\Omega-\Lambda)_- 
    \le \tr\chi_\Omega(\chi_\Omega H_\alpha
    \chi_\Omega-\Lambda)_-\chi_\Omega.
  \end{equation}
  Indeed, let  $(\omega_j)$ be an orthonormal basis of eigenfunctions of
  $H_\alpha^\Omega$. Then the extension $\tilde\omega_j$ of $\omega_j$
  by zero belongs to the form domain of $\chi_\Omega
  H_\alpha\chi_\Omega$ and
  \begin{align*}
    \tr (H_\alpha^\Omega-\Lambda)_- 
    & = \sum_j ((H_\alpha^\Omega-\Lambda)\omega_j,\omega_j)_-
     = \sum_j ((\chi_\Omega H_\alpha\chi_\Omega-\Lambda)
    \tilde\omega_j,\tilde\omega_j)_- \\
    & \leq \sum_j ((\chi_\Omega H_\alpha\chi_\Omega-\Lambda)_-
    \tilde\omega_j,\tilde\omega_j) 
     = \tr\chi_\Omega(\chi_\Omega H_\alpha
    \chi_\Omega-\Lambda)_-\chi_\Omega.
  \end{align*}
  Now let $\phi$ be a convex function of the form
  $\phi(\lambda)=\int(\lambda-\mu)_-w(\mu)\,d\mu$ for some $w\geq
  0$. Then it follows from \eqref{eq:gamma0} that
  \begin{equation*}
    \tr\phi(H_\alpha^\Omega) 
    \le \tr\chi_\Omega\phi(\chi_\Omega H_\alpha \chi_\Omega)\chi_\Omega,
  \end{equation*}
  and hence by the Berezin-Lieb inequality (see \cite{B2}, \cite{L}
  and also \cite{LaS}, \cite{laptev}),
  \begin{align*}
    \tr\phi(H_\alpha^\Omega)
    \leq
    \tr\chi_\Omega\phi(\chi_\Omega H_\alpha \chi_\Omega)\chi_\Omega 
    \leq \tr\chi_\Omega\phi(H_\alpha)\chi_\Omega.
  \end{align*}
  As already noted in Remark \ref{gammamono}, the function
  $\phi(\lambda)=(\lambda-\Lambda)_-^\gamma$ is of the considered
  form. Noting that in this case
  $\tr\chi_\Omega\phi(-\Delta)\chi_\Omega =(4\pi(\gamma+1))^{-1}
  |\Omega| \Lambda^{\gamma+1}$, the assertion 
  follows from Proposition \ref{weakdiamag}.
\end{proof}

%%%%%%%%%%%%%%%%%%%%%%%%%%%%%%%%%%%%%%%%%%%%%%%%%%%%%%%%%%%%%%%
%%%%%%%%%%%%%%%%%%%%%%%%%%%%%%%%%%%%%%%%%%%%%%%%%%%%%%%%%%%%%%%

\section{Numerical experiments}\label{sec:num}

\subsection{Numerical evaluation of the magnetic Berezin-Li-Yau  constant}

In this section we present a numerical study,
somewhat in the spirit of \cite[Appendix~A]{LT1}, of the constant
$R_{\gamma}(\alpha)$ appearing in \eqref{eq:blythm}
for $H_\alpha^\Omega$ on various domains. We consider the unit disc, a
square and three annuli: 
\begin{align*}
  A&:=\{x:|x|< 1\}, 
& C&:=\{x:  1<|x|< 1.1\}, \\ 
  B&:=\{x: \max(|x_1|, |x_2|)< 1\}, 
& D&:=\{x: 1<|x|< 2\},    \\ 
&&E&:=\{x: 1<|x|< 11\}.
\end{align*}
To expose possible diamagnetic effects we perform identical
experiments for $\alpha=0$ and $\alpha=0.2$ throughout. If the
eigenvalues $\lambda_1,\lambda_2,\ldots$ are known, the task is to determine 
%the optimal constant $R_\gamma$ in 
%\begin{equation*}
%  \sum_{j:\ \lambda_j\le\Lambda} (\Lambda-\lambda_j)^\gamma\le R_\gamma
%  \frac{|\Omega|}{4\pi(\gamma+1)}\Lambda^{\gamma+1},
%\end{equation*}
%that is,
\begin{equation}\label{ltdef}
  R_\gamma=\sup_\Lambda r_\gamma(\Lambda),
\qquad\text{where}\ r_\gamma(\Lambda):=
\frac{4\pi(\gamma+1)}{|\Omega|\Lambda^{\gamma+1}}
\sum_{j:\ \lambda_j<\Lambda} (\Lambda-\lambda_j)^\gamma.
\end{equation}
%The Weyl-type asymptotics of the eigenvalues ensures 
%$\lim_{\Lambda\to\infty}r_\gamma(\Lambda)=1$ so that $R_\gamma\ge
%1$, the case of equality being referred to as `classical'. 
Since  the sum is simply the counting function if $\gamma=0$, $r_0$ is
decreasing for all $\Lambda$ but the eigenvalues. 
%Hence if $R_0>1$ then
%necessarily $R_0=r_0(\lambda_j)$ for some $j$. 
In the physically  important
case $\gamma=1$, the quotient $r_1$ is continuous but $r_1'$ has jump
increases at the eigenvalues. Figures~\ref{fig:ltbot} and \ref{fig:ltinf}
show, respectively, the schematic behaviour of $r_\gamma$ near the bottom of the
spectrum and for large $\Lambda$ in the case where $R_\gamma=1$
(in fact, the spectrum of $H_0^B$ has been considered).
Figure~\ref{fig:ltnoncl} depicts $r_0(\Lambda)$ for an operator 
with $R_\gamma>1$ (see \cite[Thm.~3.2]{amh}),
namely the Schr\"odinger operator $H_\alpha-|x|^{-1}$. 
The  $\times$ symbols on the $x$ axis indicate the eigenvalue loci.

\subsection{Computation of the eigenvalues}

By separation of variables the spectrum of $H_\alpha^A$ is the union
of the spectra of the family of one-dimensional problems 
\begin{equation*}
  -u''+\frac{(n-\alpha)^2-1/4}{r^2}u=\lambda u, \qquad u(1)=0, \qquad u\in L^2(0,1),
\end{equation*}
parametrized by $n\in\Z$. $\sqrt{r}J_{|n-\alpha|}(\sqrt{\lambda}r)$ is an
eigenfunction provided $\lambda$ is chosen to that
$J_{|n-\alpha|}(\sqrt{\lambda})=0$, i.e., the eigenvalues are squares of
the zeros of the Bessel function of the first kind. Using standard
numerical procedures for this task we compute all eigenvalues
below 10,000 (approximately 2,500) of $H_0^A$ and all eigenvalues below
50,000  (approx. 12,000) of 
$H_{0.2}^A$. The lowest eigenvalues of the two operators can be viewed
in Figure~\ref{fig:Aeigv}. The reader interested in the dependence of
the eigenvalues on the parameter $\alpha$ may find 
Figure~\ref{fig:rf6} useful, where the lowest eigenvalues of
$H^A_\alpha$ as functions of $\alpha\in[0,\frac12]$ are shown. We recall
that $H^A_\alpha$ and $H^A_{1-\alpha}$ share the same spectrum.
 
Similarly, the eigenvalue problems for $H_\alpha^C$, $H_\alpha^D$ and
$H_\alpha^E$ are reduced to
\begin{equation*}
  -u''+\frac{(n-\alpha)^2-1/4}{r^2}u=\lambda u, \qquad u(r_1)=0=u(r_2),
  \qquad u\in L^2(r_1,r_2).
\end{equation*}
In order that the general solution
$\sqrt{r}(c_1J_{|n-\alpha|}(\sqrt{\lambda}r)+c_2Y_{|n-\alpha|}(\sqrt{\lambda}r))$ 
satisfy the Dirichlet conditions we need to fix the ratio $c_1/c_2$ and
take $\lambda$ such that 
\begin{equation*}
  J_{|n-\alpha|}(\sqrt{\lambda}r_1)Y_{|n-\alpha|}(\sqrt{\lambda}r_2)
=J_{|n-\alpha|}(\sqrt{\lambda}r_2)Y_{|n-\alpha|}(\sqrt{\lambda}r_1).
\end{equation*}
Equivalently, $\lambda=x^2/r_1^2$ is an eigenvalue whenever $x$ is a zero of
the cross-product $J_\nu(t)Y_\nu(\mu t)-J_\nu(\mu
t)Y_\nu(t)$ for $\mu=r_2/r_1$, $\nu=|n-\alpha|$. Since the
amplitude of the oscillation grows approximately as $\mu^\nu/\nu$,
the computation of the zeros is nontrivial for large orders. By
searching in the negative $t$ direction and making use of the simple
fact that the asymptotic spacing of the zeros is $\pi/(\mu-1)$
\cite[9.5.28]{as}, 
we circumvent much of these difficulties in comparison to routines
such as \textsc{Mathematica}'s  \emph{BesselJYJYZeros}. This allows us
to compute, for $\alpha=0$ and $0.2$, all eigenvalues below 10,000
(approx. 400)
of $H_\alpha^C$, all eigenvalues below 1,000 (approx. 550) of $H_\alpha^D$ and
all eigenvalues below 300 (approx. 9,000) of $H_\alpha^E$.

$H_0^B$ is simply the Dirichlet Laplacian on $[-1,1]^2$, so for all $k,l\in\N$,
$$\sin\left({k\pi}\frac{x+1}2\right)\sin\left({l\pi}\frac{y+1}2\right)$$
is an eigenfunction and $\frac{\pi^2}4(k^2+l^2)$  an eigenvalue. For
$\alpha\neq 0$ the absence of radial symmetry  prevents us
from solving the eigenvalue problem for $H_\alpha^B$ exactly. We
therefore resort to the finite element method as implemented in
\textsc{Comsol Multiphysics} to compute approximate eigenvalues. A
8,800-element mesh, gradually refined near the singularity, is used to
compute the 250 lowest eigenvalues, all less than 840. Providing an a
priori bound on the computational error would be beyond the scope of
this article; it is an encouraging fact, however, that the approximate
eigenvalues of $H^A_{0.2}$ do not differ from the exact ones by more
than 0.8~\%. Figure~\ref{fig:rf5} is the analogue of
Figure~\ref{fig:Aeigv} for $H^B_\alpha$.

\subsection{Outcome of the experiments}
We are obviously constrained to computing \eqref{ltdef} with $\Lambda$
restricted to a finite interval but provided this interval is
sufficiently large it should be possible to tell the behaviors depicted in
Figures~\ref{fig:ltinf} and \ref{fig:ltnoncl} apart. Particularly if
the function $r_\gamma$ tends steadily to $1$ from below we can conclude 
with certainty that its supremum will eventually be this number in
virtue of the eigenvalue asymptotics. It turns out that \emph{all the
  operators we consider, whether $\alpha=0$ or $0.2$, seem to obey a
  semi-classical eigenvalue estimate with unit constant for both
  $\gamma=0$ and $1$.} 

What however differs is the rate of convergence to 1, e.g., $r_\gamma$
increases slower for $H_\alpha^C$ than for $H_\alpha^E$ since
$|C|<|E|$, implying that the
point spectrum of the former operator is sparser. To see the influence
of $\alpha$ on $r_\gamma$ we invite the reader to compare
Figures~\ref{fig:rf3b}, \ref{fig:rf2a} and \ref{fig:rf2b}. Note that
$r_\gamma$ has the most irregular behavior in the cases $\alpha=0$ and
$\alpha=0.5$ due to the high degree of degeneracy of the eigenvalues,
cf.~Figure~\ref{fig:rf6}. Moreover we
have the impression, from looking at Figures~\ref{fig:ltbot},
\ref{fig:rf3b} and \ref{fig:rf3a}, that the shape of the domain has
a  limited influence on $r_\gamma$. All three
plots suggest that $R_0=1$ but this  has not been proved so far even
for $H^A_0$, the Laplacian on the unit disc!

%%%%%%%%%%%%%%%%%%%%%%%%%%%%%%%%%%%%%%%%%%%%%%%%%%%%%%%%%%%%%%%%%%%%%%%%%%%%%%%

\bibliographystyle{amsalpha}

\begin{figure}[p]
  \includegraphics[width=\textwidth, height=8.1cm]{}\centering
  \caption{$r_\gamma$ for $H_0^B$ near the bottom of the
    spectrum}
  \label{fig:ltbot}
\end{figure}
\begin{figure}[p]
  \includegraphics[width=\textwidth, height=8.1cm]{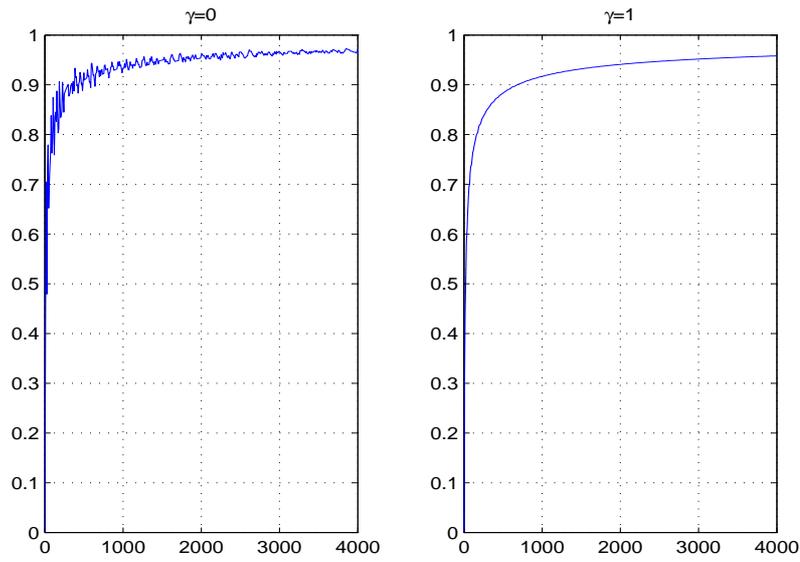}\centering
  \caption{ $r_\gamma(\Lambda)$ for $H_0^B$ for large $\Lambda$}
  \label{fig:ltinf}
\end{figure}
\begin{figure}[p]
  \includegraphics[width=\textwidth, height=8.1cm]{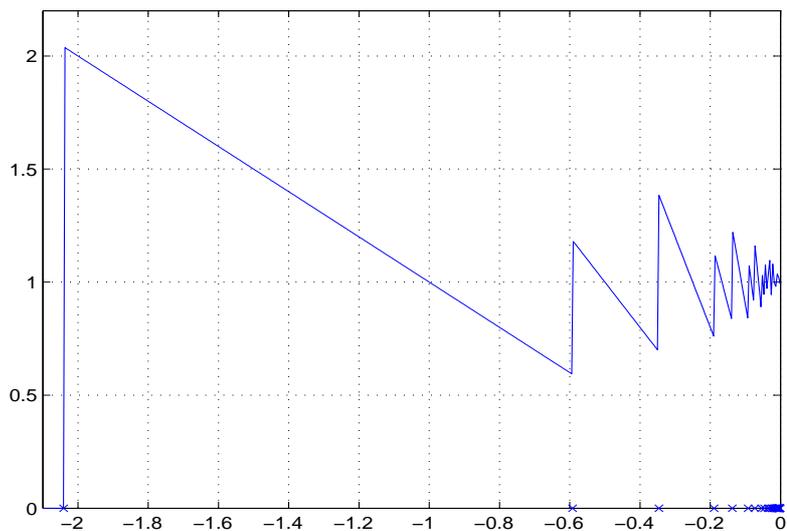}\centering
  \caption{ $r_0$ for an operator with $R_0>1$}
  \label{fig:ltnoncl}
\end{figure}
\begin{figure}[p]
  \includegraphics[width=\textwidth, height=8.1cm]{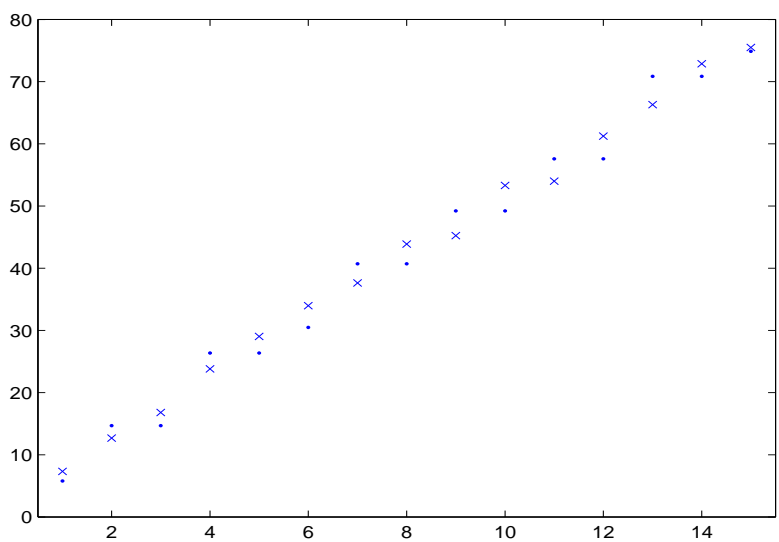}\centering
  \caption{Lowest eigenvalues of $H_0^A$ ($\cdot$) and of $H_{0.2}^A$ ($\times$)}
  \label{fig:Aeigv}
\end{figure}
\begin{figure}[p]
  \includegraphics[width=\textwidth, height=8.1cm]{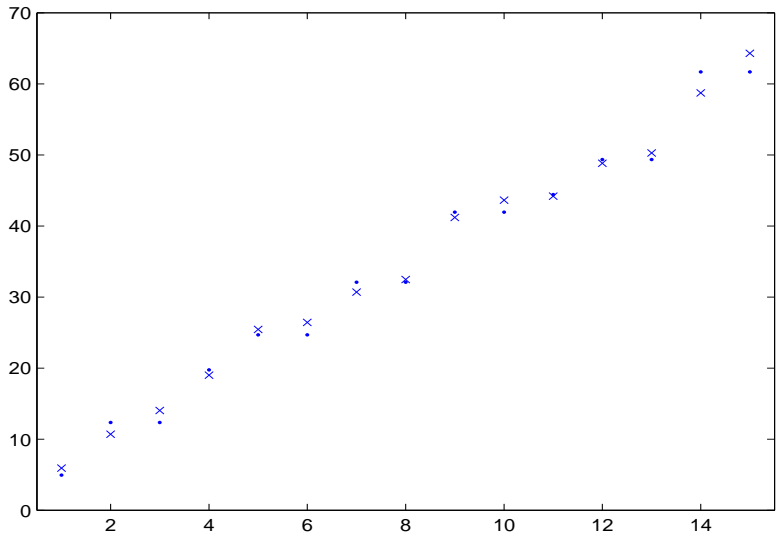}\centering
  \caption{Lowest eigenvalues of $H_0^B$ ($\cdot$) and of $H_{0.2}^B$ ($\times$)}
  \label{fig:rf5}
\end{figure}\begin{figure}[p]
  \includegraphics[width=\textwidth, height=8.1cm]{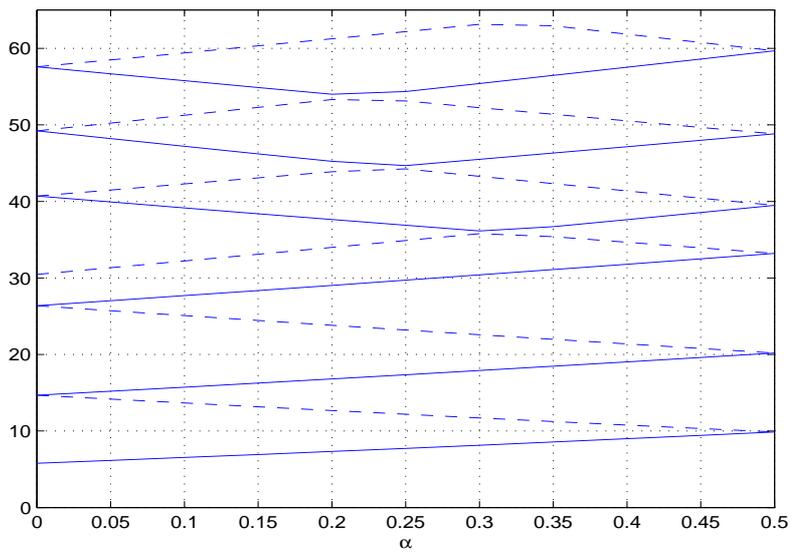}\centering
  \caption{Lowest eigenvalues of $H_\alpha^A$ as function of $\alpha$}
  \label{fig:rf6}
\end{figure}
\begin{figure}[p]
  \includegraphics[width=\textwidth, height=8.1cm]{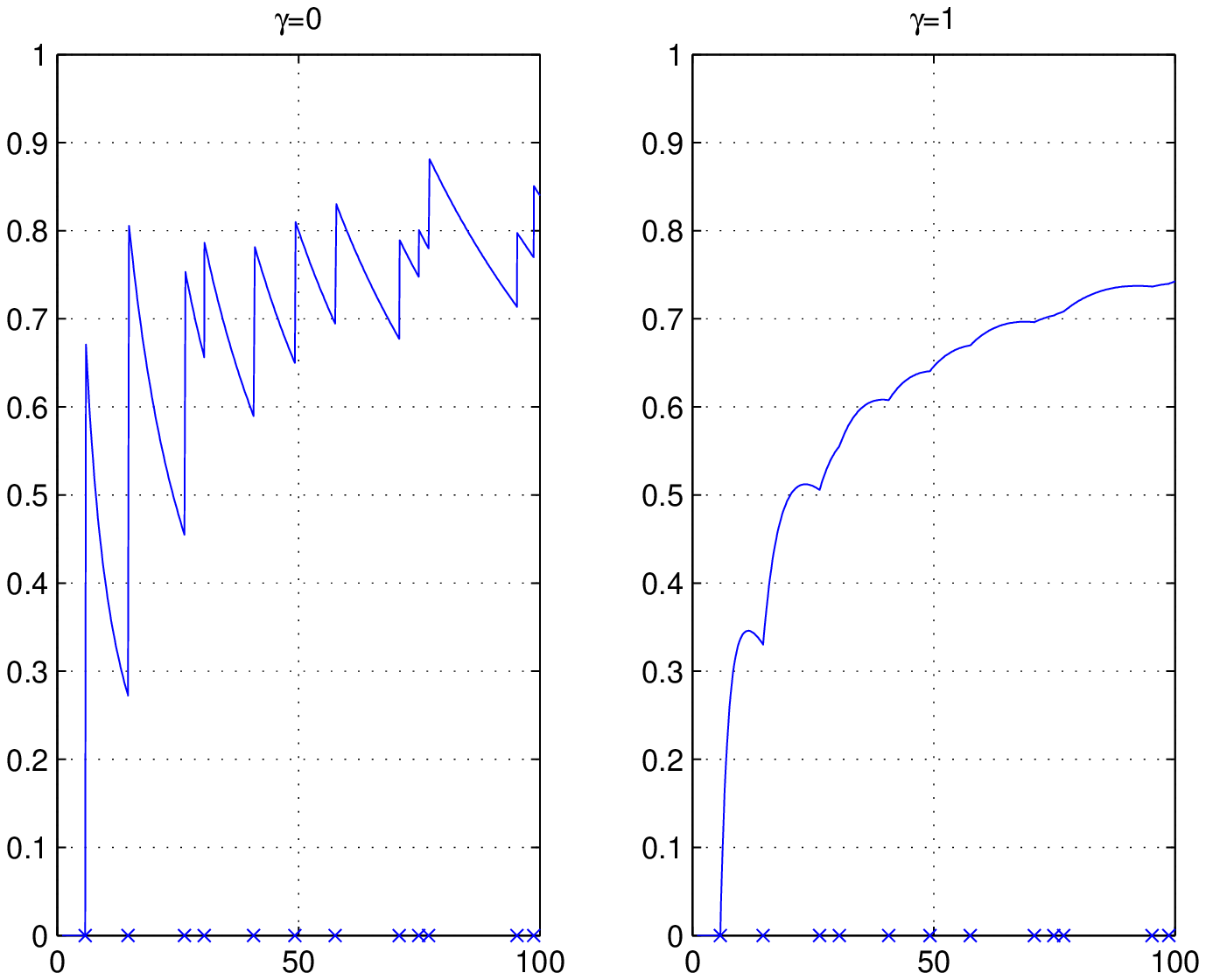}\centering
  \caption{$r_\gamma$ for $H_0^A$ near the bottom of the
    spectrum}
  \label{fig:rf3b}
\end{figure}
\begin{figure}[p]
  \includegraphics[width=\textwidth, height=8.1cm]{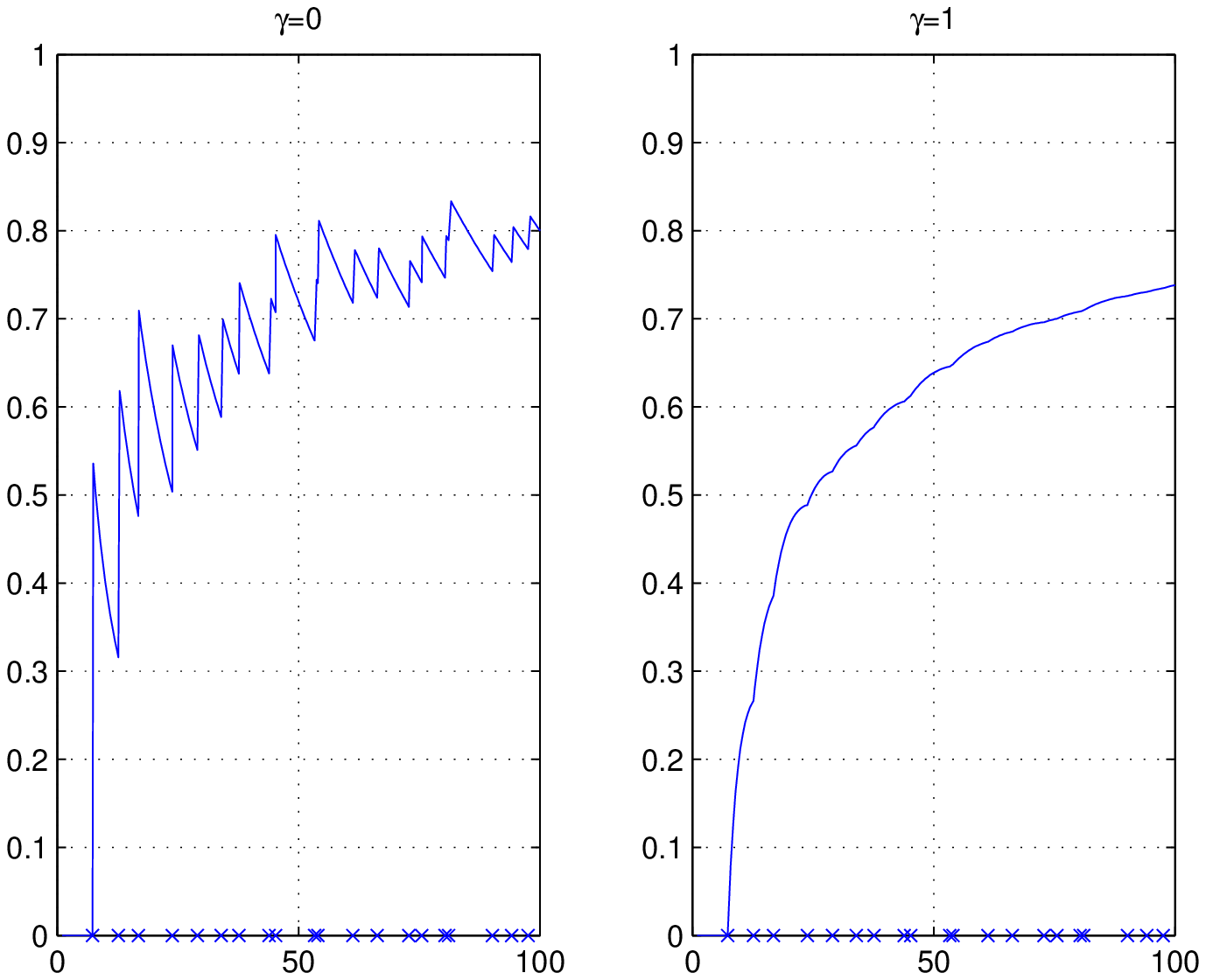}\centering
  \caption{$r_\gamma$ for $H_{0.2}^A$ near the bottom of the
    spectrum}
  \label{fig:rf2a}
\end{figure}\begin{figure}[p]
  \includegraphics[width=\textwidth, height=8.1cm]{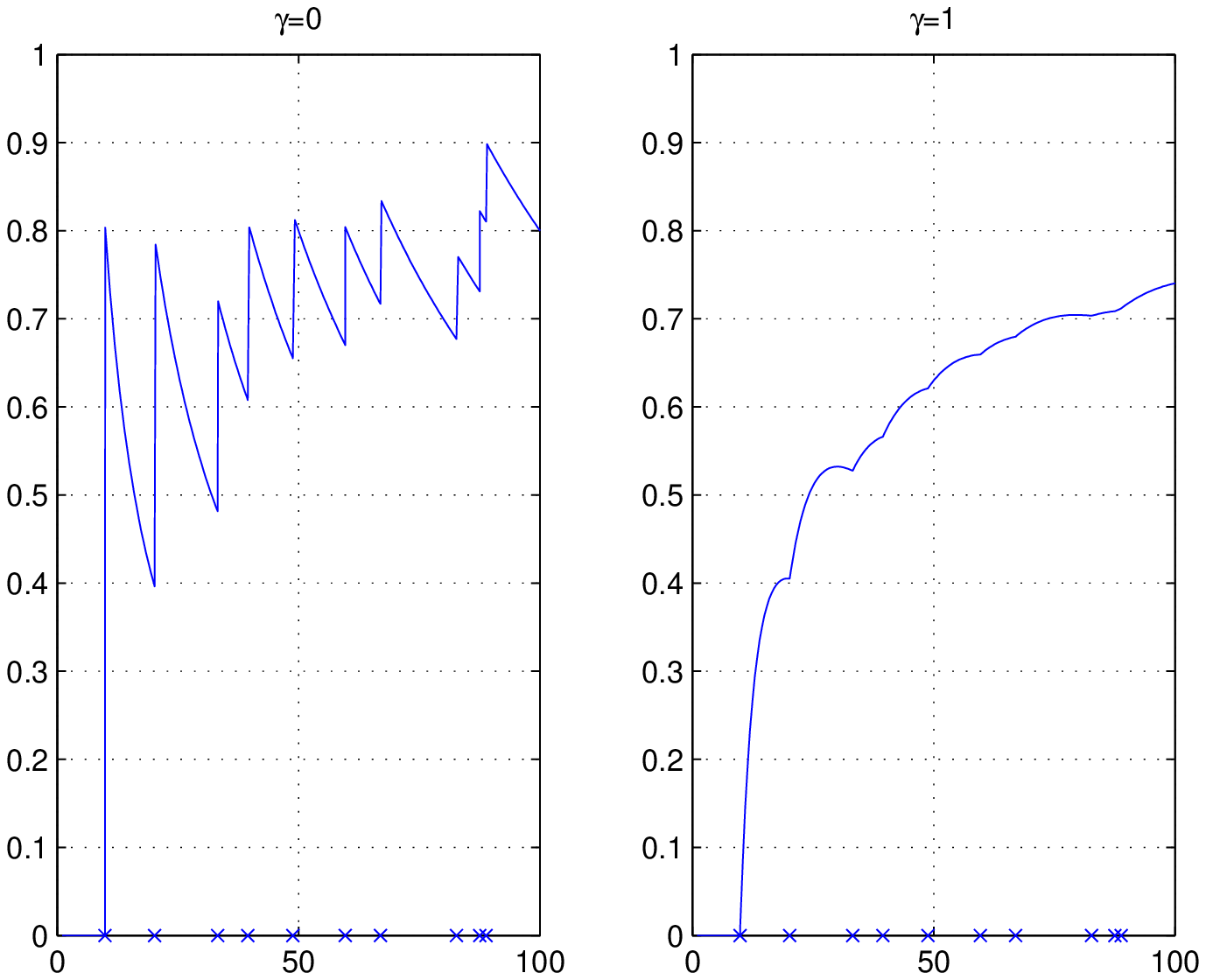}\centering
  \caption{$r_\gamma$ for $H_{0.5}^A$ near the bottom of the
    spectrum}
  \label{fig:rf2b}
\end{figure}
\begin{figure}[p]
  \includegraphics[width=\textwidth, height=8.1cm]{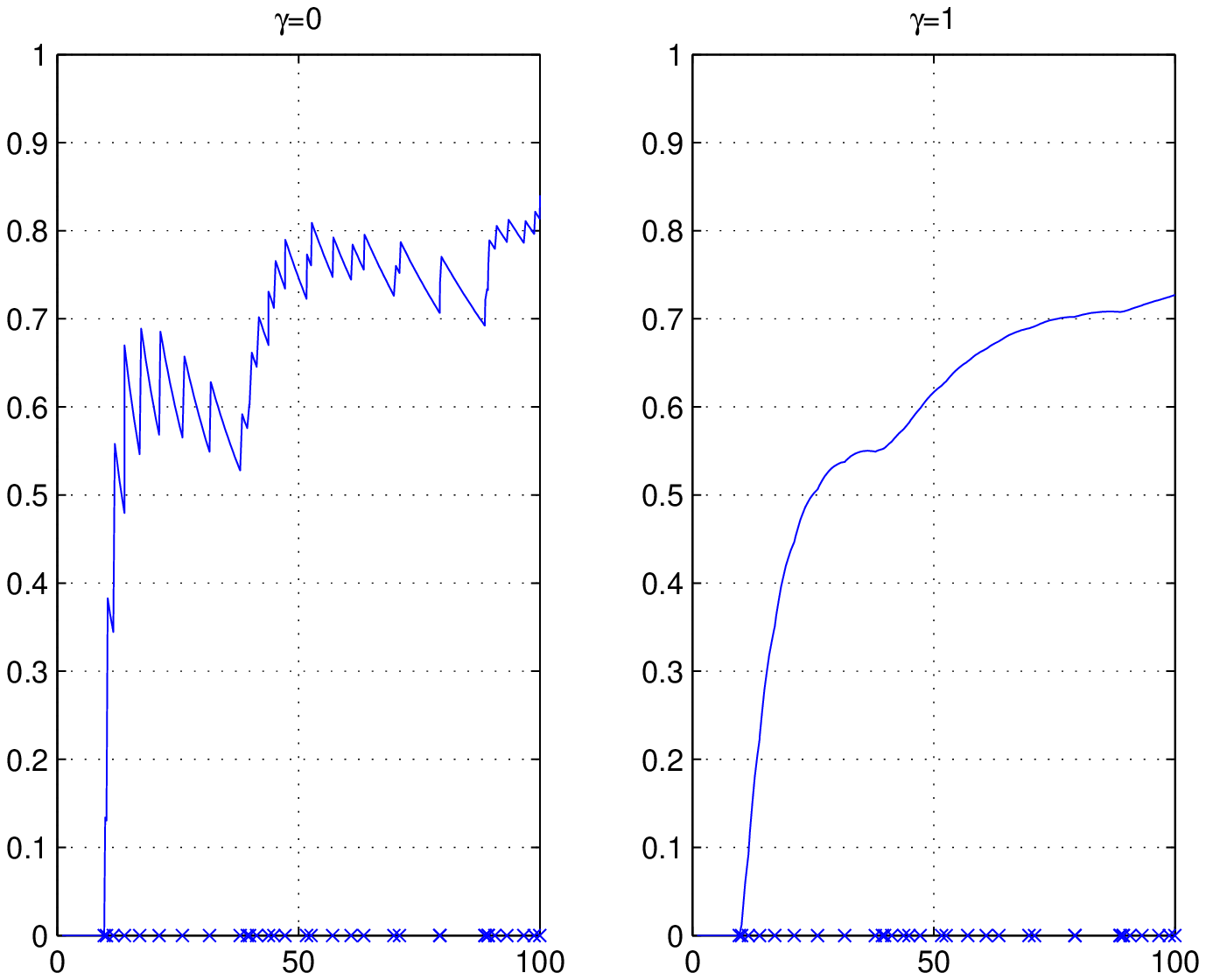}\centering
  \caption{$r_\gamma$ for $H_0^D$ near the bottom of the
    spectrum}
  \label{fig:rf3a}
\end{figure}

\end{document}